
\documentclass[leqno]{article}

\usepackage[letterpaper]{geometry}

\usepackage{ltexpprt}

\newcommand{\ilie}{\textcolor{magenta}}

\newcommand{\modif}{\textcolor{blue}}

\usepackage{comment}
\usepackage[tbtags]{amsmath}
\usepackage{amssymb,amsfonts}
\usepackage{subfig}
\usepackage{caption}
\usepackage{algorithmic}
\usepackage{graphicx}
\usepackage{textcomp}
\usepackage{xcolor}
\usepackage[hidelinks]{hyperref}
\usepackage[ruled, vlined,algo2e]{algorithm2e}
\usepackage{booktabs}
\usepackage{adjustbox}
\usepackage{pifont}
\usepackage{csquotes}
\usepackage{float}
\usepackage{makecell}
\usepackage{tabularx} 

\newcommand{\algnameone}{\texttt{PRESTO-A}}
\newcommand{\algnametwo}{\texttt{PRESTO-E}}
\newcommand{\algname}{\texttt{PRESTO}} 

\begin{document}

\title{\algname: Simple and Scalable Sampling Techniques for the Rigorous Approximation of Temporal Motif Counts\thanks{
        This work was supported, in part, by MIUR of Italy, 
        under PRIN Project n. 20174LF3T8 AHeAD, 
        and by the U. of Padova under projects ``STARS 2017" and ``SID 2020: RATED-X''.}}
\author{Ilie Sarpe\thanks{Dept.\ of Information Engineering, University of Padova, Italy. Email: \texttt{\{sarpeilie,vandinfa\}@dei.unipd.it}}
    \and Fabio Vandin$^{\text{\textdagger}}$
}

\date{}

\maketitle







\begin{abstract} \small\baselineskip=9pt The identification and counting of small graph patterns, called \emph{network motifs}, is a fundamental primitive in the analysis of networks, with application in various domains, from social networks to neuroscience. Several techniques have been designed to count the occurrences of motifs in static networks, with recent work focusing on the computational challenges provided by large networks. Modern networked datasets contain rich information, such as the time at which the events modeled by the networks edges happened, which can provide useful insights into the process modeled by the network. The analysis of motifs in temporal networks, called \emph{temporal motifs}, is becoming an important component in the analysis of modern networked datasets. Several methods have been recently designed to count the number of instances of temporal motifs in temporal networks, which is even more challenging than its counterpart for static networks. Such methods are either exact, and not applicable to large networks, or approximate, but provide only weak guarantees on the estimates they produce and do not scale to very large networks. In this work we present an efficient and scalable algorithm to obtain rigorous approximations of the count of temporal motifs. Our algorithm is based on a simple but effective sampling approach, which renders our algorithm practical for very large datasets. Our extensive experimental evaluation shows that our algorithm provides estimates of temporal motif counts which are more accurate than the state-of-the-art sampling algorithms, with significantly lower running time than exact approaches, enabling the study of temporal motifs, of size larger than the ones considered in previous works, on billion edges networks. \end{abstract}

\section{Introduction}The identification of patterns is a ubiquitous problem in data mining~\cite{han2011data} and is extremely important for networked data, where the identification of small, connected subgraphs, usually called \emph{network motifs}~\cite{milo2002network} or \emph{graphlets}~\cite{prvzulj2004modeling}, have been used to study and characterize networks from various domains, including biology~\cite{mangan2003structure}, neuroscience~\cite{battiston2017multilayer}, social networks~\cite{ugander2013subgraph}, and the study of complex systems in general~\cite{milo2004superfamilies}. Network motifs have been used as building blocks for various tasks in the analyses of networks across such domains, including anomaly detection~\cite{sun2007graphscope} and clustering~\cite{benson2016higher}.

A fundamental problem in the analysis of network motifs is the counting problem~\cite{bressan2017counting,ahmed2014graph}, which requires to output the number of instances of the given topology defining the motif. This challenging computational problem has been extensively studied, with several techniques designed to count the number of occurrences of simple motifs, such as triangles~\cite{tsourakakis2009doulion,park2014mapreduce,stefani2017triest} or sparse motifs~\cite{de2017tiered}.

Most recent work has focused on providing techniques for the analysis of \emph{large} networks, which have become the standard in most applications.
However, in addition to a significant increase in size, modern networks also feature a richer structure, in terms of the type of information that is available for their vertices and edges~\cite{ceccarello2017clustering}. A type of information that has drawn significant attention in recent years is provided by the temporal dimension~\cite{holme2012temporal,holme_temporal_2019}. In several applications edges are supplemented with \emph{timestamps} describing the time at which an event, modeled by an edge, occurred: for example, in the analysis of spreading processes in epidemics, nodes are individuals, an edge represents a physical interaction between two individuals, and the timestamp represents the time at which the interaction was recorded \cite{peixoto2018change}.

When studying motifs in temporal networks, one is usually interested in occurrences of a given topology whose edge timestamps all appear in a small time span~\cite{holme2012temporal,paranjape_motifs_2017}. 
Discarding the temporal information of the network, i.e. ignoring the timestamps, may lead to incorrect characterization of the system of interest, while the analysis of temporal networks can provide insights that are not revealed when the temporal information is not accounted for~\cite{holme2012temporal}. For example, in a temporal network, a triangle $x \rightarrow y \rightarrow z \rightarrow x$  represents some feedback process on the information originated from $x$ only if the edges occur at increasing timestamps (and the triangle occurs in a small amount of time). This information is revealed only by considering the timestamps, while by restricting to the static network (i.e., discarding edges timestamps) we may, often incorrectly, conclude that initial information starting from $x$ always affects such sequence of events. Motifs that capture temporal interactions, such as the ones we consider, can provide more useful information than static motifs, as shown in several applications, including network classification~\cite{tu2018network} and in the identification of mixing services from bitcoin networks~\cite{wu2020detecting}. Furthermore, while on static networks motifs with high counts are associated with important properties of the dataset (e.g., its domain), the temporal information provides additional insights on the network and the nature of frequently appearing motifs, such as, for example, the presence of bursty activities~\cite{belth2020persistence}. Unfortunately, current techniques do not enable the analysis of large temporal networks, preventing researchers from studying temporal motifs in complex systems from many areas. 

The problem of counting motifs in temporal networks is even more challenging than  its counterpart for static networks, since there are motifs for which the problem is NP-hard for temporal networks while it is efficiently solvable for static networks~\cite{liu_sampling_2019}. 
Current approaches to count motifs in temporal networks are either \emph{exact}~\cite{paranjape_motifs_2017,mackey_chronological_2018}, and cannot be employed for very large networks, or \emph{approximate}~\cite{liu_sampling_2019, wang2020efficient}, but provide only rather weak guarantees on the quality of the estimates they return. In addition, even approximate approaches do not scale to billion edges networks. 

In this work we focus on the problem of counting motifs in temporal networks. Our goal is to obtain an efficiently computable approximation to the count of motifs of \emph{any} topology while providing rigorous guarantees on the quality of the result.

\subsubsection*{Our contributions}

This work provides the following contributions.
\begin{itemize}
    \item We present \algname, an algorithm to approximate the count of motifs in temporal networks, which provides rigorous (probabilistic) guarantees on the quality of the output. We present two variants of \algname, both based on a common approach that counts motifs within randomly sampled temporal windows. Both variants allow to analyze billion edges datasets providing sharp estimates. \algname~features several useful properties, including: 
    i) it has only one easy to interpret parameter, $c$, defining the length of the temporal windows for the samples; ii) it can approximate the count of \emph{any} motif topology; iii) it is easily parallelizable, with an almost linear speed-up with the available processors;
    \item We provide tight and efficiently computable bounds to the number of samples required by our algorithms to achieve (multiplicative) approximation error  $\le \varepsilon$ with probability $\ge 1 - \eta$, for given $\varepsilon >0$ and $\eta \in (0,1)$. Our bounds are obtained through the application of advanced concentration results (i.e., Bennett's inequality) for the sum of independent random variables. 
    \item We perform an extensive experimental evaluation on real datasets, including a dataset with more than 2.3 billion edges, never examined before. The results show that on large datasets our algorithm \algname\ significantly improves over the state-of-the-art sampling algorithms in terms of quality of the estimates while requiring a small amount of memory.
\end{itemize}


\section{Preliminaries}
\label{sec:prelim} 
In this section we introduce the basic definitions used throughout this work. We start by formally defining temporal networks.
\begin{Definition}
    \label{defn:temporal_graph}
    A \emph{temporal network} is a pair $T=(V,E)$ where, $V=\{v_1, \dots , v_n\}$ and $E=\{(x,y,t):x,y \in V, x \neq y, t \in \mathbb{R^{+}}\}$ with $|V|=n$ and $|E|=m$.
\end{Definition}

Given $(x,y, t) \in E$, we say that $t$ is the \emph{timestamp} of the edge $(x,y)$.
For simplicity in our presentation we assume the timestamps to be unique, which is without loss of generality since in practice our algorithms also handle non-unique timestamps. We also assume the edges to be sorted by increasing timestamps, that is $t_1 < \dots < t_m$. Given an interval or \emph{window} $[t_{B}, t_{E}] \subseteq \mathbb{R}$ we will denote $|t_{E}-t_{B}|$ as its \emph{length}.

We are interested in \emph{temporal motifs}\footnote{In static networks, the term \emph{graphlet}~\cite{yaverouglu2014revealing} is sometimes used, with \emph{motifs} denoting statistically significant graphlets. We use the term motif in accordance with previous work on temporal networks~\cite{paranjape_motifs_2017,liu_sampling_2019,wang2020efficient}.}, which are small, connected subgraphs whose edge timestamps satisfy some constraints. In particular, we consider the following definition introduced in~\cite{paranjape_motifs_2017}.

\begin{Definition}\label{defn:temporal-motif}
    A \emph{$k$-node $\ell$-edge temporal motif} $M$ is a pair $M = (\mathcal{K}, \sigma)$ where $\mathcal{K}=(V_\mathcal{K}, E_\mathcal{K})$ is a directed and weakly connected \emph{multigraph} where $V_{\mathcal K} =\{v_1, \dots , v_k\}$, $E_\mathcal{K}=\{(x,y):x,y \in V_\mathcal{K}, x \neq y\}$ s.t.\  $|V_\mathcal{\mathcal{K}}|=k$ and $|E_\mathcal{K}|=\ell$ and $\sigma$ is an ordering of $E_\mathcal{K}$.
\end{Definition}

Note that a $k$-node $\ell$-edge temporal motif $M = (\mathcal{K}, \sigma)$ is also identified by the sequence $(x_1,y_1), \dots,$ $ (x_{\ell},y_{\ell})$ of edges ordered according to $\sigma$. Given a $k$-node $\ell$-edge temporal motif $M$, $k$ and $\ell$ are determined by $V_\mathcal{K}$ and $E_\mathcal{K}$. We will therefore use the term \emph{temporal motif}, or simply \emph{motif}, when $k$ and $\ell$ are clear from context.

Given a temporal motif $M$, we are interested in counting how many times it appears within a time \emph{duration} of $\delta$, as captured by the following definition.

\begin{Definition}\label{defn:delta-instance}
    Given a temporal network $T=(V,E)$ and $\delta \in \mathbb{R^{+}}$, a \emph{time ordered} sequence $S=(x'_{1}, y'_{1}, t'_{1}), \dots, (x'_{\ell}, y'_{\ell}, t'_{\ell})$ of $\ell$ unique temporal edges from  $T$ is a $\delta$\textit{-instance} of the temporal motif $M=(x_1,y_1),\dots,(x_{\ell},y_{\ell})$  if:
    \begin{enumerate}
        \item there exists a bijection $f$ on the vertices such that $f(x'_{i}) = x_i$ and $f(y'_{i}) = y_i, \, i=1, \dots , \ell$; and
        \item the edges of $S$ occur within $\delta$ time, i.e., $t'_{\ell} - t'_{1} \leq \delta$.
    \end{enumerate} 
\end{Definition}

Note that in a $\delta$-instance of the temporal motif $M=(\mathcal{K},\sigma)$ the edge timestamps must be sorted according to the ordering $\sigma$.
See Figure~\ref{fig:basicdef} for an example.

\setlength{\textfloatsep}{12pt}
\begin{figure}[t]
    \centering
    \subfloat[]{
        \includegraphics[width=.3\linewidth]{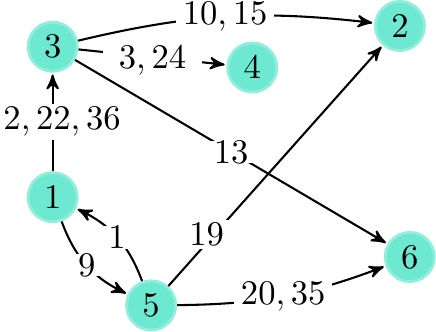}
        \label{subfig:tgraph}
    }\qquad
    \subfloat[]{
        \includegraphics[width = .3\linewidth]{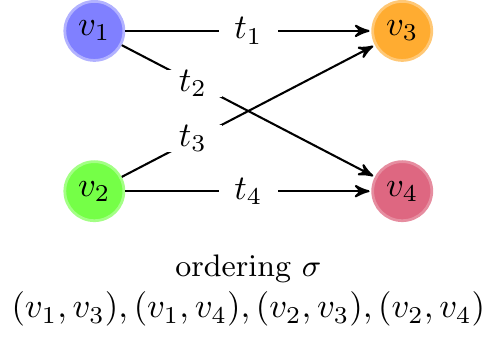}
        \label{subfig:tmotif}
    }\\
    \subfloat[]{
        \begin{tabular}{cccc}
            \includegraphics[width=.15\linewidth]{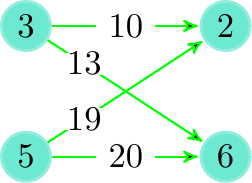} &  \includegraphics[width=.15\linewidth]{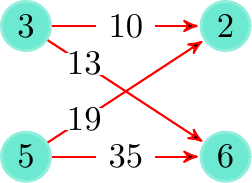} &
            \includegraphics[width=.15\linewidth]{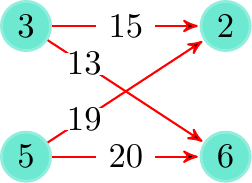} &
            \includegraphics[width=.15\linewidth]{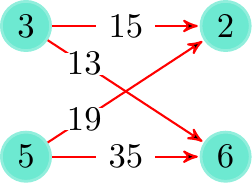}
        \end{tabular}
        \label{subfig:instances}
    }
    \caption{(\ref{subfig:tgraph}): example of temporal network $T$ with $n=6$ nodes and $m=13$ edges. (\ref{subfig:tmotif}): a temporal motif, known as Bi-Fan~\cite{liu_sampling_2019}. (\ref{subfig:instances}): sequences of edges of $T$ that map topologically on the Bi-Fan motif, i.e., in terms of static graph isomorphism. For $\delta= 10$ only the green sequence is a $\delta$-instance of the Bi-Fan, since the timestamps respect $\sigma$ and $t_{\ell}'-t_1' = 20-10 \le \delta$. The red sequences are not $\delta$-instances, since they do not respect such constraint or do not respect the order in $\sigma$.}
    \label{fig:basicdef}
\end{figure}

Let   $\mathcal{U}(M,\delta)=\{u: u$ is a $\delta$-instance of $M \}$ be the \textit{set of (all) $\delta$-instances} of the motif $M$ in $T$, denoted only with $\mathcal{U}$ when $M$ and $\delta$ are clear from the context. Given a $\delta$-instance $u \in \mathcal{U}(M,\delta)$, we denote the timestamps of its first and last edge with $t_1^u$ and $t_{\ell}^u$, respectively. The \emph{count} of $M$ is $C_M(\delta) = |\mathcal{U}(M,\delta)|$, denoted with $C_M$ when $\delta$ is clear from the context. The notation used throughout this work is summarized in Table \ref{tab:notationTable}. We are interested in solving the following problem.

\textbf{Motif counting problem.} Given a temporal network $T$, a temporal motif  $M = (\mathcal{K}, \sigma)$, and $\delta \in \mathbb{R^{+}}$, compute the count $C_M(\delta)$ of $\delta$-instances of motif $M$ in the temporal network $T$.

Solving the motif counting problem \emph{exactly} may be infeasible for large networks, since even  determining whether a temporal network contains a simple \emph{star motif} is NP-hard~\cite{liu_sampling_2019}. State-of-the-art exact techniques \cite{mackey_chronological_2018, paranjape_motifs_2017} require exponential time and memory in the number of edges of the temporal network, which renders them impractical for large temporal networks. We are therefore interested in obtaining efficiently computable approximations of motif counts, as follows.

\textbf{Motif approximation problem.} Given a temporal network $T$, a temporal motif  $M = (\mathcal{K}, \sigma)$, $\delta \in \mathbb{R^{+}}$, $\varepsilon \in \mathbb{R}^{+}_0, \eta \in (0,1)$ compute $C_{M}'$ such that $\mathbb{P}(|C_{M}' - C_{M}(\delta)| \geq \varepsilon C_{M}(\delta)) \leq \eta$, i.e., $C_{M}'$ is a \emph{relative $\varepsilon$-approximation} to $C_M(\delta)$ with probability at least $1-\eta$.

We call an algorithm that provides such guarantees an \emph{$(\varepsilon, \eta)$-approximation algorithm.}
\begin{table}[t]
    \centering
    \caption{Notation table.}
    \label{tab:notationTable}
        \begin{tabularx}{\textwidth}{cl}
            \toprule
            Symbol & Description\\
            \midrule
            $T=(V,E)$ & Temporal network\\
            $n,m$ & Number of nodes and number of temporal edges of $T$\\
            $t_i$ & Timestamp of the $i$-th edge $i=1,\dots,m$\\
            $M$ & Temporal motif\\
            $\ell, k$ & Edges and nodes of the temporal motif $M$\\
            $\delta$ & Duration of $\delta$-instances\\
            $\mathcal{U}$ & Set of $\delta$-instances of motif $M$ of $T$\\ 
            $C_M$ & Number of $\delta$-instances of $M$ in $T$\\
            $\varepsilon, \eta$ & Approximation and confidence parameters\\
            $s$ & Number of samples collected by \algname\ \\
            $\mathcal{S}_i$ & Set of $\delta$-instances sampled at $i$-th iteration from \algname \\
            $\Delta_{T,1}$ & Length of the window from which we sample $t_r$ in \algnameone\ \\ 
            $\Delta_{T,2}$ & Number of candidates for $t_r$ in \algnametwo \\ 
            $C_M'$ & Estimate in output by \algname\\
            $X_i$ & $i$-th random variable (i.e., estimate) obtained by \algname\\
            $X_u^i, \tilde{X}_u^i$ & \makecell[l]{Indicator random variables denoting if $u\in \mathcal{U}$ has been sampled at the $i$-th iteration, $i\in [1,s]$\\ in \algnameone\ and \algnametwo\ respectively}\\
            $p_u, \tilde{p}_u$ & \makecell[l]{Probability of selecting a sample that contains $u\in \mathcal{U}$ in \algnameone\ and \algnametwo\ respectively.}\\
            \bottomrule
        \end{tabularx}
\end{table}
\subsection{Related Work}
\label{sec:relatedwork}
Various definitions of temporal networks and motifs have been proposed in the literature; we refer the interested reader to~\cite{holme_modern_2015,holme_temporal_2019, liu2020temporal}. Here we focus on those works that adopted the same definitions used in this work.

The definition of temporal motif we adopt was first proposed by Paranjape et al.~\cite{paranjape_motifs_2017}, which provided efficient exact algorithms to solve the motif counting problem for specific motifs. 
Such algorithms are efficient only for specific motif topologies and do not scale to very large datasets.
An algorithm for the counting problem on general motifs has been introduced by Mackey et al.\ \cite{mackey_chronological_2018}. Their algorithm is the first exact technique allowing the user to enumerate all $\delta$-instances $u \in \mathcal{U}$ without any constraint on the motif's topology. The major back-draw of such algorithm is that it may be impractical even for moderately-sized networks, due to its exponential time complexity and memory requirements.

Liu et al.~\cite{liu_sampling_2019} proposed the first sampling algorithm for the temporal motif counting problem. The main strategy of \cite{liu_sampling_2019} is to partition the time interval containing all the edges of the network into \emph{non overlapping} and contiguous windows of length $c \delta$ (i.e., a grid-like partition), for some $c>1$. The partition is then randomly shifted (i.e., the starting point of the first window may not coincide with the smallest timestamp of the network). The edges in each partition constitute the candidate samples to be analyzed using an exact algorithm. 
An \emph{importance sampling} scheme is used to sample (approximately) $r$ windows among the candidates, with a window being selected with probability proportional to the fraction of edges it contains.
The estimate for each sampled window is obtained by weighting each $\delta$-instance in the window, and the estimates are averaged across windows. This procedure is repeated $b$ times to reduce the variance of the estimate. While interesting, this partition-based approach prevents such algorithm to provide $(\varepsilon, \eta)$-guarantees (see Section \ref{subsec:generalSchema}).

Recently Wang et al.~\cite{wang2020efficient} proposed an $(\varepsilon, \eta)$-approximation algorithm for the motif counting problem. Their approach selects each edge in $T$ with a user-provided probability $p$. Then for each selected  edge $e=(x,y,t)$, the algorithm collects the edges with timestamps in the edge-centered window $[t-\delta, t+\delta]$, of length 2$\delta$, computes on these edges all the $\delta$-instances $u \in \mathcal{U}$ containing $e$, weights the instances, and combines the weights to obtain the final estimate. From the theoretical point of view, the main drawback of this approach is that in order to achieve the desired guarantees one has to set $p \ge 1/(1+\eta \varepsilon^2)$, resulting in high values of $p$ (i.e., almost all edges are selected) for reasonable values of $\eta$ and $\varepsilon$ (e.g., $p>0.97$ for $\eta = 0.1$ and  $\varepsilon = 0.5$).
In addition, such approach is impractical on very large datasets, mainly due its huge memory requirements, and does not provide accurate estimates (see  Section~\ref{sec:experimentaleval}). 

\section{\algname: Approximating Temporal Motif Counts with Uniform Sampling}
\label{sec:algs}

We now describe and analyze our algorithm \algname\ (ap\underline{PR}oximating t\underline{E}mporal motif\underline{S} coun\underline{T}s with unif\underline{O}rm sampling) for the motif counting problem. We start by describing, in Sec.\ \ref{subsec:generalSchema}, the common strategy underlying our algorithms.
We then briefly highlight the differences between \algname\ and the existing sampling algorithms for the counting problem. In Sec.\ \ref{sec:RSRN} and Sec.\ \ref{sec:RSRT} we present and analyze two variants of the common strategy introduced in Sec.\ \ref{subsec:generalSchema}. We conclude with the analysis of \algname's complexity in Sec.\ \ref{subsec:complexity_analysis}.

\subsection{\algname: General Approach}
\label{subsec:generalSchema}
The general strategy of \algname~is presented in Algorithm~\ref{alg:generalschema}.
Given a network $T$, \algname\ collects $s$ samples, where each sample is obtained by gathering all edges $e \in E$ with timestamp in a \emph{small random} window $[t_r, t_{r} + c \delta]$ of length $c \delta$, using \texttt{SampleWindowStart($E$)} (Lines~\ref{algline:line_t_r}-\ref{algline:line_T_i}) to select $t_r$ (i.e., the starting time point of the window) through a uniform random sampling approach. 
For each sample, an exact algorithm is then used to extract all the $\delta$-instances of motif $M$ in such sample (Line \ref{algline:exactdeltainsts}). The weight of each $\delta$-instance extracted is computed  with the call \texttt{ComputeWeight($u$)} (Line~\ref{algline:line_wu}), and the sum of all such weights constitutes the estimate provided by the sample (Line~\ref{algline:ith-estimate}). The weights account for the probability of sampling the $\delta$-instances in the sample, making the final estimate \emph{unbiased}. The final estimate produced in output is the average of the samples' estimates (Lines~\ref{algline:finalestimate}-\ref{algline:output}). Note that the \texttt{for} cycle of Line~\ref{line_for} is trivially parallelizable.  The two variants of \algname\  we will present differ in the way i) \texttt{SampleWindowStart($E$)} and ii) \texttt{ComputeWeight($u$)} are defined.
\begin{algorithm2e}[t]
    \LinesNumbered 
    \KwIn{Temporal Network ${T}=(V,E)$, Motif $M$, Motif Duration $\delta > 0$, $s> 0$, $c>1$}
    \KwOut{Estimate $C_{M}'$ of $C_{M}$}
    \For{$i \gets 1$ \text{to} $s$\label{line_for}}
    {
        $t_r \gets$ \texttt{SampleWindowStart}($E$)\label{algline:line_t_r}\;
        $E_i \gets \{(x,y,t) \in E : (t \ge t_r) \land (t \le t_r + c \delta)  \}$\label{algline:edges}\;
        $V_i \gets \{x : ((x,y,t) \in E_i) \lor ((y,x,t) \in E_i) \}$\label{algline:nodes}\;
        $T_i \gets (V_i, E_i)$\label{algline:line_T_i}; $X_i \gets 0$\;
        $\mathcal{S}_i \gets$ $\{u: u \text{ is } \delta\text{-instance of } M \text{ in } T_i \}$\label{algline:exactdeltainsts}\;
        \ForEach{$u \in \mathcal{S}_i$}
        {
            $w(u) \gets$ \texttt{ComputeWeight}($u$)\label{algline:line_wu}\;
            $X_i \gets X_i + w(u)$\label{algline:ith-estimate}\;
        }
    }
    $C_M' \gets \frac{1}{s} \sum_{i=1}^{s} X_i$\label{algline:finalestimate}\;
    \Return $C_M'$\label{algline:output}\;
    \caption{\algname}\label{alg:generalschema}
\end{algorithm2e}

Differently from~\cite{liu_sampling_2019}, our algorithm \algname\ does not partition the edges of $T$ in non overlapping windows, and relies instead on \emph{uniform sampling}.
We recall (see Section~\ref{sec:relatedwork}) that in \cite{liu_sampling_2019}, after computing all the non overlapping intervals defining the candidate samples, there may be several $\delta$-instances $u\in \mathcal{U}$ that cannot be sampled (i.e., all $\delta$-instances having $t_1^u$ in window $j$ and $t_{\ell}^u$ in window $j+1$). This significantly differs from \algname, which instead samples at each iteration a small random window from $[t_1,t_m]$ without restricting the candidate windows, allowing to sample \emph{any} $\delta$-instance $u\in \mathcal{U}(M,\delta)$ at each iteration. 
This enables us to provide stronger guarantees on the quality of the output, since each $\delta$-instance has a non-zero probability of being sampled at each step, leading to  $(\varepsilon, \eta)$-approximation guarantees.

Differently from the work of Wang et al.~\cite{wang2020efficient} \algname\ samples temporal windows of length $c\delta$ and does not follow the edge-centric approach (i.e., sampling temporal edges with a user-provided probability $p$) of~\cite{wang2020efficient}. In addition, the approach of~\cite{wang2020efficient} collects edges in temporal-windows of length $2\delta$ (see Section~\ref{sec:relatedwork}), while in \algname\ the window size is controlled by the parameter $c$, which, when fixed to be $<2$, leads to much more scalability than~\cite{wang2020efficient} while requiring less memory (see Section~\ref{sec:experimentaleval}). 


\subsection{\algnameone: Sampling among All Windows}
\label{sec:RSRN}

In this section we present and analyze \algnameone, our first $(\varepsilon, \eta)$-approximation algorithm obtained by specifying i) how the starting point $t_r$ of the temporal window defining a sample $T_i$  is chosen (function \texttt{SampleWindowStart$(E)$} in Line~\ref{algline:line_t_r}) and ii) how the weight $w(u)$ of a $\delta$-instance $u$ in a sample is computed (\texttt{ComputeWeight$(u)$}, Line~\ref{algline:line_wu}).

The starting point $t_r$ of sample $T_i$ is sampled uniformly at random in the interval $[t_{\ell} - c\delta, t_{m-\ell}] \subseteq \mathbb{R}$, where we recall $\ell = |E_\mathcal{K}|$ and $m = |E|$.  
Regarding the choice of the weight $w(u)$ for each instances $u \in \mathcal{S}_i$,  \algnameone\ considers $w(u) = 1 / p_u$, with $p_u$ being the probability of $u$ to be in $\mathcal{S}_i$, that is  $p_u = r_u / \Delta_{T,1}$, where $\Delta_{T,1} = t_{m-\ell}-t_{\ell} + c \delta$  is the 
total length of the interval from which $t_r$ is sampled (recall we choose $t_r$ from $[t_{\ell} - c\delta, t_{m-\ell}]$),
and $r_u = c\delta - (t^{u}_{\ell} - t^{u}_{1}) $ is the length of the interval in which $t_r$ must be chosen for $u$ to be in $\mathcal{S}_i, i=1,\dots,s$.

We now present the theoretical guarantees of \algnameone\ and give efficiently computable bounds for the sample size $s$ needed for the $(\varepsilon, \eta)$-approximation to hold. 
Recall the definition of $\mathcal{U}(M,\delta)$, which is the set of $\delta$-instances of $M$ in $T$. Let $u$ be an arbitrary $\delta$-instance of motif $M$, and let $T_i$ be an arbitrary sample obtained by  \algnameone~at its $i$-th iteration. We define the following set of indicator random variables, for $u \in \mathcal{U}$ and for $i=1, \dots,s$: $X_{u}^{i}=1 \text{ if $u \in \mathcal{S}_i$, 0 otherwise.}$
Each variable $X_u^i, i=1,\dots,s\, , u \in \mathcal{U}$ is a Bernoulli random variable with $\mathbb{P}(X_{u}^{i} = 1) = \mathbb{P}(u \in \mathcal{S}_i) =
\frac{r_{u}}{\Delta_{T,1}} = p_u.$
Therefore, for each variable $X_u^i$, it holds $\mathbb{E}[X_u^i]= p_u$. 
Thus, for each $i=1,\dots,s$, iteration $i$  provides an estimate $X_i$ of $C_M$, which is the random variable: $X_i = \sum_{u \in \mathcal{U}} \frac{1}{p_u}X_{u}^i.$
We therefore have  the following result.
\begin{Lemma}\label{lemma:unbiased_estimate}
    For each  $i=1,\dots, s$, $X_i$ and $ C_M' = \frac{1}{s} \sum_{i=1}^s X_i$ are unbiased estimators for $C_M$, that is $\mathbb{E}[X_i] =\mathbb{E}[C_M']
    =C_M$.
\end{Lemma}
\begin{proof}
    Considering the linearity of expectation and the definition of the variables $X_i$ and $X_u^i, i=1,\dots,s, u \in \mathcal{U}$, we have:
    \begin{displaymath}\label{eq:proof_unbiased_RSRN}
    \mathbb{E}[X_i] \buildrel{}\over{=} \sum_{u \in \mathcal{U}} \frac{1}{p_u} \mathbb{E}\left[X_{u}^{i}\right] =
    \sum_{u \in \mathcal{U}} \frac{1}{p_u} p_u = C_M.
    \end{displaymath}
    Combining with the linearity of expectation to $C_M'$ the statement follows.
\end{proof}

The following result provides a bound on the variance of the estimate of $C_M$ provided by $C_M'$.
\begin{Lemma}\label{lemma:variance}
    For \algnameone~it holds
    \begin{displaymath}
    \mathrm{Var}\left(C_M'\right) = \mathrm{Var}\left(\frac{1}{s} \sum_{i=1}^{s} X_i\right) \leq \frac{C_M^2}{s}\left( \frac{\Delta_{T,1}}{(c-1)  \delta} -1\right).
    \end{displaymath}
\end{Lemma}
\begin{proof}
    We start by observing that,
    $
    \mathrm{Var}\left(\frac{1}{s} \sum_{i=1}^{s} X_i\right) = \frac{1}{s^2} \sum_{i=1}^{s} \mathrm{Var}(X_i)
    $
    by the mutual independence of the variables $X_1, \dots, X_s$.
    We observe that $\mathrm{Var}(X_i)= \mathbb{E}[X_i^2] - \mathbb{E}[X_i]^2, i=1,\dots,s$ and we recall that $\mathbb{E}[X_i] = C_M$ by Lemma \ref{lemma:unbiased_estimate}, 
    thus we need to bound $\mathbb{E}[X_i^2]$,
    \begin{equation}\label{eq:bound_squared_Xi}
    \begin{split}
    \mathbb{E}[X_i^2] = 
    \sum_{u_1 \in \mathcal{U}} \sum_{u_2 \in \mathcal{U}} \frac{1}{p_{u_1} p_{u_2}}\mathbb{E}[X_{u_1}^{i}X_{u_2}^{i}]
    \buildrel{(1.)}\over{\le} \sum_{u_1 \in \mathcal{U}} \sum_{u_2 \in \mathcal{U}} \frac{1}{p_{u_2}} 
    \buildrel{(2.)}\over{\leq} \sum_{u_1 \in \mathcal{U}} \sum_{u_2 \in \mathcal{U}} \frac{\Delta_{T,1}}{(c-1) \delta} = \frac{C_M^2 \Delta_{T,1}}{(c-1) \delta}
    \end{split}
    \end{equation}
    where (1.) follows from $\mathbb{E}[XY] \le \mathbb{E}[X]$  and (2.) from the definition of $p_{u_2}$.
    Based on the above, we can bound the variance of the variables $X_i, i=1,\dots,s$, as follows,
    $
    \mathrm{Var}(X_i) = \mathbb{E}[X_i^2] - \mathbb{E}[X_i]^2 \le C_M^2 \left(\frac{ \Delta_{T,1}}{(c-1) \delta} -1  \right).
    $
    Note that such bound does not depend on the index $i=1,\dots,s$, thus substituting in the original summation we obtain,
    $ \mathrm{Var}\left(\frac{1}{s} \sum_{i=1}^{s} X_i\right) \leq  \frac{C_M^2}{s} \left(\frac{ \Delta_{T,1}}{(c-1) \delta} -1  \right).$
\end{proof}
We now present a first efficiently computable bound on the number of samples $s$ required to have that $C_M'$ is a relative $\varepsilon$-approximation of $C_M$ with probability $\ge 1 - \eta$. 
Such bound is based on the application of Hoeffding's inequality (see \cite{mitzenmacher2017probability}), 
an advanced but commonly used technique in the analysis of probabilistic algorithms. Let us first introduce Hoeffding's inequality,
\begin{theorem}[{\cite{hoeffding_probability_1963}}]\label{theo:hoeffding}
    Let $X_1, \dots, X_s$ be independent random variables such that for all $1 \le i \le s$, $\mathbb{E}[X_i]=\mu$ and $\mathbb{P}(a \le X_i \le b) = 1.$ Then
    \begin{displaymath}
    \mathbb{P} \left( \left|\frac{1}{s}\sum_{i=1}^s X_i - \mu \right|\ge t \right) \le 2 \exp\left(-\frac{2st^2}{(b-a)^2}\right)
    \end{displaymath}
\end{theorem}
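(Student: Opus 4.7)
The plan is to prove Hoeffding's inequality via the classical Chernoff-bound / moment generating function argument, reducing the problem to a single-variable moment generating function bound known as Hoeffding's lemma. First I would recenter by setting $Y_i = X_i - \mu$, so that each $Y_i$ has zero mean and is supported in an interval of length $b-a$. By a union bound it suffices to show
\[
\mathbb{P}\Bigl(\sum_{i=1}^s Y_i \ge st\Bigr) \le \exp\!\Bigl(-\tfrac{2 s t^2}{(b-a)^2}\Bigr),
\]
and then invoke symmetry (applying the same argument to $-Y_i$) to pick up the factor of $2$.

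Next, for any $\lambda > 0$, I would apply Markov's inequality to $e^{\lambda \sum_i Y_i}$ and use independence:
\[
\mathbb{P}\Bigl(\sum_{i=1}^s Y_i \ge st\Bigr) \le e^{-\lambda s t}\prod_{i=1}^s \mathbb{E}\bigl[e^{\lambda Y_i}\bigr].
\]
The core of the proof is now Hoeffding's lemma: if $Y$ has mean zero and $Y \in [a',b']$ with $b'-a' = b-a$, then $\mathbb{E}[e^{\lambda Y}] \le \exp(\lambda^2 (b-a)^2/8)$. To establish this, I would use convexity of $y \mapsto e^{\lambda y}$ to write $e^{\lambda y} \le \tfrac{b'-y}{b'-a'} e^{\lambda a'} + \tfrac{y-a'}{b'-a'} e^{\lambda b'}$; taking expectations and using $\mathbb{E}[Y]=0$ yields a closed-form upper bound $e^{\psi(\lambda)}$ for a smooth function $\psi$ with $\psi(0) = \psi'(0) = 0$ and $\psi''(\lambda)$ equal to the variance under the exponentially tilted distribution on $[a',b']$. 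A bounded random variable on an interval of length $b-a$ has variance at most $(b-a)^2/4$, so $\psi''(\lambda) \le (b-a)^2/4$, and Taylor's theorem gives $\psi(\lambda) \le \lambda^2 (b-a)^2/8$.

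Plugging the lemma back in gives
\[
\mathbb{P}\Bigl(\sum_{i=1}^s Y_i \ge st\Bigr) \le \exp\!\Bigl(\tfrac{s \lambda^2 (b-a)^2}{8} - \lambda s t\Bigr),
\]
and optimizing over $\lambda > 0$ yields $\lambda^\star = 4 t/(b-a)^2$ and the stated exponent $-2st^2/(b-a)^2$. The symmetric tail and a union bound produce the factor of $2$ in front.

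The main obstacle is Hoeffding's lemma; the Markov step, the independence factorization, and the one-variable optimization in $\lambda$ are routine once the MGF bound is in hand. The convexity-plus-variance-of-bounded-variable route is clean and avoids any delicate Taylor-with-remainder manipulation of the explicit log-MGF expression $\log(p e^{\lambda a'} + (1-p)e^{\lambda b'})$, which is the alternative and slightly more tedious path.
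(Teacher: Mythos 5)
The paper does not prove this statement at all: Theorem~\ref{theo:hoeffding} is Hoeffding's inequality, imported verbatim from \cite{hoeffding_probability_1963} as a known concentration result and used as a black box in the proof of Theorem~\ref{theo:bound_s_sloppy_RSRN}. Your proposal is the standard textbook proof --- recentering, the Chernoff/moment-generating-function step with independence, Hoeffding's lemma established via the convexity bound $e^{\lambda y} \le \frac{b'-y}{b'-a'}e^{\lambda a'} + \frac{y-a'}{b'-a'}e^{\lambda b'}$ together with the $(b-a)^2/4$ variance bound for a bounded variable, optimization at $\lambda^{\star}=4t/(b-a)^2$, and a union bound over the two tails --- and it is correct as outlined; the exponent $-2st^2/(b-a)^2$ indeed comes out of the stated optimization. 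Since the paper offers no argument to compare against, the only remark worth making is that your write-up supplies a self-contained derivation of a result the authors deliberately treat as standard.
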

then the bound on the sample size we derive based on the above is as follows, 
\begin{theorem} \label{theo:bound_s_sloppy_RSRN}
    Given $\varepsilon \in \mathbb{R}^{+}, \eta \in (0,1)$ let $X_1, \dots, X_s$ be the random variables associated with the counts computed at iterations $1,\dots,s$, respectively, of \algnameone. If $s \geq  \frac{\Delta_{T,1}^2}{2(c-1)^2 \delta^2\varepsilon^2} \ln \left( \frac{2}{\eta} \right) $, then it holds
    \begin{displaymath}
    \mathbb{P}\left( \left|\frac{1}{s}\sum_{i=1}^s X_i - C_{M}\right| \geq \varepsilon C_{M} \right) \leq \eta
    \end{displaymath}
    that is, \algnameone~is an $(\varepsilon, \eta)$-approximation algorithm.
\end{theorem}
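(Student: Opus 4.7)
The plan is to apply Hoeffding's inequality (Theorem~\ref{theo:hoeffding}) to the independent random variables $X_1, \dots, X_s$. By Lemma~\ref{lemma:unbiased_estimate} they share the common expectation $\mu = C_M$, so the only missing ingredient is an almost-sure interval $[a,b]$ containing each $X_i$.

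First I would establish a uniform upper bound on the per-instance weights. Since every $\delta$-instance $u \in \mathcal{U}$ satisfies $t^u_\ell - t^u_1 \leq \delta$ by Definition~\ref{defn:delta-instance}, the quantity $r_u = c\delta - (t^u_\ell - t^u_1)$ is at least $(c-1)\delta$, and therefore $1/p_u = \Delta_{T,1}/r_u \leq \Delta_{T,1}/((c-1)\delta)$. Combining this with the fact that $X_i = \sum_{u \in \mathcal{S}_i} 1/p_u$ is a sum over $\mathcal{S}_i \subseteq \mathcal{U}$, and hence contains at most $|\mathcal{U}| = C_M$ nonnegative terms, I obtain
\[
0 \leq X_i \leq \frac{C_M \, \Delta_{T,1}}{(c-1)\delta}.
\]

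Applying Theorem~\ref{theo:hoeffding} with $t = \varepsilon C_M$ and $b - a = C_M \Delta_{T,1}/((c-1)\delta)$ then yields
\[
\mathbb{P}\!\left(\left|\tfrac{1}{s}\sum_{i=1}^s X_i - C_M\right| \geq \varepsilon C_M\right) \;\leq\; 2\exp\!\left(-\frac{2 s (c-1)^2 \delta^2 \varepsilon^2}{\Delta_{T,1}^2}\right).
\]
The unknown count $C_M$ cancels between the numerator $t^2 = \varepsilon^2 C_M^2$ and the denominator $(b-a)^2$, which is precisely what makes the resulting bound a priori computable. Requiring the right-hand side to be at most $\eta$ and solving for $s$ produces exactly the threshold stated in the theorem.

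The only nontrivial step is the $X_i \leq C_M \Delta_{T,1}/((c-1)\delta)$ bound: using only the pointwise bound $1/p_u \leq \Delta_{T,1}/((c-1)\delta)$ would not suffice, since $X_i$ is a sum; one must additionally invoke $|\mathcal{S}_i| \leq |\mathcal{U}| = C_M$ and then rely on the fortuitous cancellation of $C_M$ inside the Hoeffding exponent. Everything else is routine manipulation.
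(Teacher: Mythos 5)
Your proposal is correct and follows essentially the same route as the paper's own proof: both bound each $X_i$ almost surely in $[0, C_M\Delta_{T,1}/((c-1)\delta)]$ using $r_u \ge (c-1)\delta$ and $|\mathcal{U}| = C_M$, then apply Hoeffding's inequality with $t = \varepsilon C_M$ and let $C_M$ cancel in the exponent. No gaps to report.
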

\begin{proof}
    
    Let $Pr = \mathbb{P}(|\frac{1}{s}\sum_{i=1}^s X_i - C_{M}|$ $\geq \varepsilon C_{M} )$, we want to show that for  s as in statement it holds $Pr \le \eta$.  Observe that the variables $X_1,\dots,X_s$, are mutually independent. And, by Lemma 
    \ref{lemma:unbiased_estimate}, $\mathbb{E}[X_i] = C_M = \mu, i=1,\dots,s$. Then for $X_i, i=1,\dots,s$, the following holds:
    \begin{equation}\label{eq:bound_X_i_RSRN}
    X_i = \sum_{u \in \mathcal{U}} \frac{1}{p_u}X_{u}^i 
    \buildrel{(1.)}\over{\le} \sum_{u \in \mathcal{U}} {\frac{\Delta_{T,1}}{(c-1) \delta}} \buildrel{}\over{\le} \frac{C_M \Delta_{T,1}}{(c-1) \delta}
    \end{equation}
    where $(1.)$ follows from  the definition of $p_u$, by considering each of the variables $X_u^i =1$, and noting that for each $u \in \mathcal{U}$ it holds $r_u = c \delta - (t^{u}_{\ell} - t^{u}_{1}) \ge (c-1) \delta$. 
    Thus w.p.\ 1, $a = 0 \le X_i \le {C_M \Delta_{T,1}/{((c-1) \delta)}} = b, i=1,\dots,s$. Now  let $t = \varepsilon C_M$. Then,
    $
    (b-a)^2 = \left(\frac{ \Delta_{T,1}}{(c-1) \delta}C_M -0\right)^ 2 = \frac{ \Delta_{T,1}^2 C_M^2}{(c-1)^2 \delta^2} 
    $
    and applying Theorem \ref{theo:hoeffding} to bound $Pr$ we obtain,
    \begin{displaymath}
    Pr \le 2 \exp\left(-\frac{2s\varepsilon^2C_M^2}{\frac{ \Delta_{T,1}^2 C_M^2}{(c-1)^2 \delta^2}}\right) \le \eta 
    \end{displaymath}
    by the choice of $s$ as in statement.
\end{proof}

We now show that by using Bennett's inequality (see \cite{boucheron2013concentration}), a more advanced result on the concentration of the sum of independent random variables, we can derive a bound that is much tighter than the above, while still being efficiently computable. We first state Bennett's inequality,
\begin{theorem}[{\cite{bennett_probability_1962}}]\label{theo:bennet} 
    For a collection $X_1,\dots,X_s$ of independent random variables satisfying $X_i \le M_i$, $\mathbb{E}[X_i]= \mu_i$ and $\mathbb{E}[(X_i-\mu_i)^2]= \sigma_i^2$ for $i=1,\dots,s$ and for any $t\ge 0$, the following holds
    \begin{displaymath}
    \mathbb{P} \left( \left|\frac{1}{s}\sum_{i=1}^s X_i - \frac{1}{s}\sum_{i=1}^s \mu_i \right|\ge t \right) \le 2\exp\left(-s\frac{v}{B^2} h\left(\frac{tB}{v}\right) \right)
    \end{displaymath}
    where $h(x) = (1+x) \ln(1+x)-x, B= \max_i M_i - \mu_i$ and $v=\frac{1}{s}\sum_{i=1}^s \sigma_i^2$.
\end{theorem}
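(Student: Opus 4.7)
The plan is to apply the classical Cram\'er--Chernoff (exponential Markov) method, which is the standard route to all sub-exponential concentration bounds of this form. First I would center the variables by setting $Y_i = X_i - \mu_i$, so that $\mathbb{E}[Y_i]=0$, $\mathrm{Var}(Y_i)=\sigma_i^2$, and $Y_i \le M_i - \mu_i \le B$. I would first establish the one-sided upper-tail bound $\mathbb{P}(\tfrac{1}{s}\sum_i Y_i \ge t) \le \exp(-(sv/B^2)\,h(tB/v))$; the two-sided statement then follows by running the same argument with $-Y_i$ in place of $Y_i$ (under a symmetric boundedness hypothesis on the lower tail) and union-bounding over the two tails, which produces the factor $2$.

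For the upper tail, exponential Markov combined with independence gives, for any $\lambda > 0$,
\begin{equation*}
\mathbb{P}\!\left(\sum_{i=1}^s Y_i \ge s t\right) \le e^{-\lambda s t}\prod_{i=1}^s \mathbb{E}[e^{\lambda Y_i}].
\end{equation*}
The central step is to establish Bennett's moment generating function lemma: for any zero-mean $Y$ with $Y \le B$ and $\mathbb{E}[Y^2]=\sigma^2$,
\begin{equation*}
\mathbb{E}[e^{\lambda Y}] \le \exp\!\left( \tfrac{\sigma^2}{B^2} \bigl( e^{\lambda B} - 1 - \lambda B \bigr) \right).
\end{equation*}
Taking the product over $i$ and using $\sum_i \sigma_i^2 = s v$ turns the tail bound into $\exp\!\bigl(-\lambda s t + \tfrac{s v}{B^2}(e^{\lambda B}-1-\lambda B)\bigr)$. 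I would then optimize over $\lambda > 0$ by differentiation: the first-order condition gives $\lambda^{\star} = \tfrac{1}{B}\ln(1 + tB/v)$, and after substitution the exponent becomes precisely $-\tfrac{s v}{B^2}\,h(tB/v)$ with $h(x) = (1+x)\ln(1+x) - x$, matching the statement.

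I expect the main obstacle to be the MGF lemma, whose heart is the pointwise inequality $e^{\lambda y} \le 1 + \lambda y + \tfrac{y^2}{B^2}(e^{\lambda B} - 1 - \lambda B)$ valid for $\lambda > 0$ and \emph{all} $y \le B$. A naive term-by-term Taylor comparison fails for strongly negative $y$ because for even $k \ge 4$ one would need $|y|^{k-2} \le B^{k-2}$, which does not follow from $y \le B$ alone. I would instead prove that $\phi(y) := (e^{\lambda y} - 1 - \lambda y)/y^2$ (extended continuously at $0$ to $\lambda^2/2$) is nondecreasing on $\mathbb{R}$; a short computation reduces this to the sign of $g(y) := e^{\lambda y}(\lambda y - 2) + 2 + \lambda y$, and since $g(0) = g'(0) = 0$ and $g''(y) = \lambda^3 y\, e^{\lambda y}$ has the same sign as $y$, one obtains $g(y) \ge 0$ for $y \ge 0$ and $g(y) \le 0$ for $y \le 0$, hence monotonicity. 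Taking expectations in the pointwise inequality (the linear term vanishes by $\mathbb{E}[Y]=0$) and using $1+x \le e^x$ to exponentiate yields the MGF bound, closing the proof.
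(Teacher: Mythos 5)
The paper does not prove this statement at all: it is quoted as a classical result with a citation to Bennett (1962) (see also Boucheron--Lugosi--Massart), so there is no paper proof to compare against. Your derivation is the standard and correct one for the one-sided bound: Chernoff's exponential Markov step, the Bennett MGF lemma $\mathbb{E}[e^{\lambda Y}] \le \exp\bigl(\tfrac{\sigma^2}{B^2}(e^{\lambda B}-1-\lambda B)\bigr)$ proved via the monotonicity of $y \mapsto (e^{\lambda y}-1-\lambda y)/y^2$ (your sign analysis of $g(y)=e^{\lambda y}(\lambda y-2)+2+\lambda y$ is right), and optimization at $\lambda^{\star} = B^{-1}\ln(1+tB/v)$, which yields exactly the exponent $-\tfrac{sv}{B^2}h(tB/v)$. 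You are also right to flag the one genuine subtlety: the hypotheses as quoted only bound the variables from above ($X_i \le M_i$), which suffices for the upper tail, whereas the two-sided statement with the factor $2$ additionally needs control of the lower tail, e.g.\ $|X_i - \mu_i| \le B$, since the pointwise inequality behind the MGF lemma reverses for $\lambda < 0$. This gap is in the theorem as transcribed, not in your argument, and it is immaterial for the paper's applications: there $X_i \ge 0$ and $\mu_i = C_M \le B$ (the ratio $\Delta_{T,1}/((c-1)\delta)$, resp.\ $\Delta_{T,2}$, is at least $2$ in any realistic setting), so the centered variables are indeed bounded in absolute value by $B$ and the two-sided bound applies.
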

One of the difficulties in applying such result to our scenario is that we know only an upper bound to the variance of the random variables $X_i, i=1, \dots,s$. We discuss why Bennett's inequality can be applied in such situation in Supplement \ref{app:bennett_note}. 
We now state our main result.
\begin{theorem}
    \label{theo:bound_s_RSRN}
    Given $\varepsilon \in \mathbb{R}^{+}, \eta \in (0,1)$ let $X_1, \dots, X_s$ be the random variables associated with the counts computed at iterations $1,\dots,s$, respectively, of \algnameone. If $s \geq  \left(\frac{ \Delta_{T,1}}{(c-1) \delta} - 1 \right)\frac{1}{(1+\varepsilon) \ln(1+\varepsilon) - \varepsilon} \ln \left( \frac{2}{\eta} \right)$, then it holds
    \begin{displaymath}
    \mathbb{P}\left( \left|\frac{1}{s}\sum_{i=1}^s X_i - C_{M}\right| \geq \varepsilon C_{M} \right) \leq \eta
    \end{displaymath}
    that is, \algnameone~is an $(\varepsilon, \eta)$-approximation algorithm.
\end{theorem}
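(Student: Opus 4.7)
The plan is to apply Bennett's inequality (Theorem~\ref{theo:bennet}) to the random variables $X_1, \dots, X_s$, reusing the ingredients already established in Lemmas~\ref{lemma:unbiased_estimate} and~\ref{lemma:variance} and in the Hoeffding-based proof of Theorem~\ref{theo:bound_s_sloppy_RSRN}. The variables are mutually independent (each iteration of \algnameone\ samples $t_r$ afresh), and by Lemma~\ref{lemma:unbiased_estimate} each has mean $\mu_i = C_M$. From Equation~\eqref{eq:bound_X_i_RSRN} we inherit the almost-sure upper bound $X_i \le C_M \Delta_{T,1}/((c-1)\delta)$, so we may take $M_i = C_M \Delta_{T,1}/((c-1)\delta)$ and hence $B = \max_i (M_i - \mu_i) = C_M\bigl(\Delta_{T,1}/((c-1)\delta) - 1\bigr)$. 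From Lemma~\ref{lemma:variance} (more precisely, from its per-iteration bound in the proof) we have $\sigma_i^2 \le C_M^2\bigl(\Delta_{T,1}/((c-1)\delta) - 1\bigr)$, which gives the same upper bound on $v = \frac{1}{s}\sum_i \sigma_i^2$.

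Next I would set $t = \varepsilon C_M$ and substitute these quantities into the right-hand side of Bennett's inequality. The crucial simplification is that the ratio $tB/v$ collapses to $\varepsilon$: indeed, using the bounds above,
\begin{displaymath}
\frac{tB}{v} \;=\; \frac{\varepsilon C_M \cdot C_M\bigl(\Delta_{T,1}/((c-1)\delta) - 1\bigr)}{C_M^2\bigl(\Delta_{T,1}/((c-1)\delta) - 1\bigr)} \;=\; \varepsilon,
\end{displaymath}
so $h(tB/v) = (1+\varepsilon)\ln(1+\varepsilon) - \varepsilon$. Similarly, $v/B^2 = 1/\bigl(\Delta_{T,1}/((c-1)\delta) - 1\bigr)$. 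Therefore Bennett's inequality yields
\begin{displaymath}
\mathbb{P}\!\left(\left|\tfrac{1}{s}\sum_{i=1}^s X_i - C_M\right| \ge \varepsilon C_M\right) \;\le\; 2\exp\!\left(-\frac{s\,\bigl((1+\varepsilon)\ln(1+\varepsilon)-\varepsilon\bigr)}{\Delta_{T,1}/((c-1)\delta) - 1}\right).
\end{displaymath}
Requiring the right-hand side to be at most $\eta$ and solving for $s$ gives exactly the bound in the statement.

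The one subtlety is that Bennett's inequality is usually stated assuming the exact variances $\sigma_i^2$ are known, whereas here we only have an upper bound on them. I would justify that substituting an upper bound is legitimate by invoking the monotonicity argument referenced in Supplement~\ref{app:bennett_note}: the tail bound remains valid when $v$ is replaced by any upper bound, because increasing $v$ (with $B$ fixed) only weakens the right-hand side of Bennett's bound once the expression $(v/B^2)\,h(tB/v)$ is seen to be monotone non-increasing in $v$ for fixed $t,B$. With this justification in place, the computation above goes through verbatim and the theorem follows. The main obstacle is really just the bookkeeping in the $tB/v$ simplification and the care needed in citing the variance-bound extension of Bennett; the rest is a direct instantiation of results already proved.
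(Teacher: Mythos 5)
Your proposal is correct and follows essentially the same route as the paper's own proof: the same choices $B = C_M\left(\frac{\Delta_{T,1}}{(c-1)\delta}-1\right)$, $\hat\sigma_i^2 = C_M^2\left(\frac{\Delta_{T,1}}{(c-1)\delta}-1\right)$, $t=\varepsilon C_M$, the same collapse of $tB/\hat v$ to $\varepsilon$, and the same appeal to the monotonicity argument of Supplement~\ref{app:bennett_note} to justify using an upper bound on the variance in Bennett's inequality. No gaps to report.
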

\begin{proof}
    Let $Pr = \mathbb{P}\left(|\frac{1}{s}\sum_{i=1}^s X_i - C_{M}| \geq \varepsilon C_{M} \right)$, we want to show that for $s$ as in the statement it holds $Pr \le \eta$. 
    
    In order to apply Bennett's bound, note that
    $\frac{1}{s}\sum_{i=1}^s \mathbb{E}[X_i] = \frac{1}{s}\sum_{i=1}^s \mu_i= C_M$ by Lemma \ref{lemma:unbiased_estimate}
    . Moreover $B =\max_i M_i - \mu_i  = \frac{ \Delta_{T,1}}{(c-1) \delta} C_M - C_M = C_M\left(\frac{ \Delta_{T,1}}{(c-1) \delta} -1 \right)$ and in addition 
    $\mathbb{E}[(X_i-\mu_i)^2]= \sigma_i^2 \le C_M^2 \left(\frac{ \Delta_{T,1}}{(c-1) \delta} -1 \right) = \hat\sigma_i^2, i=1,\dots,s$ (see Eq.\ \eqref{eq:bound_squared_Xi} and Eq.\ \eqref{eq:bound_X_i_RSRN}), thus $v=\frac{1}{s}\sum_{i=1}^s \sigma_i^2 \le \frac{1}{s}\sum_{i=1}^s \hat\sigma_i^2 = \frac{1}{s}\sum_{i=1}^s  C_M^2 \left(\frac{ \Delta_{T,1}}{(c-1) \delta} -1\right) =  C_M^2 \left(\frac{ \Delta_{T,1}}{(c-1) \delta} -1\right) = \hat v$. Let $t = \varepsilon C_M$, then it holds
    \item \begin{displaymath}\label{eq:boundinner_RSRN}
    \begin{split}
    \frac{tB}{\hat v} = \varepsilon \qquad \text{ and }\qquad \frac{\hat v}{B^2} = \frac{1}{\left( \frac{ \Delta_{T,1}}{(c-1) \delta}-1 \right)}
    \end{split}
    \end{displaymath}
    applying Theorem~\ref{theo:bennet}, with the upper bound $\hat{v}$ to $v$ we obtain,
    \begin{displaymath} 
    \begin{split}
    Pr \buildrel{}\over{\le}2 \exp \left(-s\frac{ \hat v}{B^2} h\left(\frac{tB}{ \hat v}\right) \right) 
    \le \eta
    \end{split}
    \end{displaymath}
    by substituting the quantities above and by the choice of $s$ as in statement, which concludes the proof. 
\end{proof}

Note that the bound in Theo.\ \ref{theo:bound_s_RSRN} is significantly better than the one in Theo.\ \ref{theo:bound_s_sloppy_RSRN}, since the former has a quadratic dependence from $\Delta_{T,1}$ while the latter enjoys a linear dependence from $\Delta_{T,1}$. Furthermore,  differently from the bounds in \cite{wang2020efficient}, our bounds depend on characteristic quantities of the datasets ($\Delta_{T,1}$) and of the algorithm's input $(c, \delta)$. Therefore we expect our bounds to be more informative, and also possibly tighter than the bounds of~\cite{wang2020efficient} (with the improvement due, at least in part, to the use of Bennett's inequality, while~\cite{wang2020efficient} leverages on Chebyshev's inequality, which usually provides looser bounds \cite{mitzenmacher2017probability}). 
We conclude by giving an intuition on the bound in Theorem \ref{theo:bound_s_RSRN}, observe in fact that the bound is directly proportional to the quantity $\Delta_{T,1}/((c-1)\delta)$, such quantity (approximatively) corresponds to the fraction between the timespan of the temporal network $T$ and the length of each window we draw, thus such bound suggests (in a very intuitive way) that as long as the windows we draw have a larger length (i.e., $\delta$ or $c$ become larger) then less samples are needed for \algnameone\ to provide concentrate estimates within the desired theoretical guarantees.  Note that windows with a larger duration will require a higher running time to be processed, as we will discuss in Section \ref{subsec:complexity_analysis}.


\subsection{\algnametwo: Sampling the Start from Edges}\label{sec:RSRT}
In this section we present and analyse our second $(\varepsilon,\eta)$ approximation algorithm, \algnametwo, obtained with a different variant of the general strategy of Algorithm~\ref{alg:generalschema}.

\algnametwo\ selects the starting point $t_r$ (Line \ref{algline:line_t_r} in Alg.\ \ref{alg:generalschema}) of sample $T_i$ only from the timestamps of edges of $T$. In particular, $t_r$ is chosen
uniformly at random in $\{t_1, \dots, t_{last}\} \subseteq \{t_1,\dots,t_m\}$, where $t_{last} = \min\{t: (x,y,t) \in E \land t \ge t_m - c \delta\}$. 
When the edges of $T$ are far from each other or have a skewed distribution in time, or occur mostly in some well-spaced subsets $\{[t_a,t_b] \subseteq [t_1,t_m]: t_a \le t_b \}$, 
we expect \algnametwo~to collect samples with more $\delta$-instances than \algnameone~(which samples $t_r$ uniformly at random almost from the entire time interval $[t_1,t_m]$ but without restricting to existing edges).
Similarly to \algnameone, the weight $w(u)$ (Line \ref{algline:line_wu} of Alg.\ \ref{alg:generalschema}) is computed by $w(u) = 1 / \tilde{p}_u$, where $\tilde{p}_u$ is the probability that $u \in \mathcal{U}$ is in the set $\mathcal{S}_i$. 
Due to the (random) choice of $t_r$, $\tilde{p}_u = \tilde{r}_{u}/\Delta_{T,2}$, where $\Delta_{T,2} = |\{t:(x,y,t) \in E \land t \in [t_1, t_{last}] \}|$  is the number of possible choices for $t_r$ (if $t_{last} = t_m$ then $\Delta_{T,2} = m$) and  $\tilde{r}_u = |\{t: (x,y,t) \in E \land t \in [\max\{t_1, t_{\ell}^{u} - c\delta\},\min\{t_{last},t_{1}^{u}\}] \}|$ is the number of choices for $t_r$ such that $u \in \mathcal{U}$ is in $\mathcal{S}_i, i =1,\dots,s$.

Similarly as done for \algnameone\, we prove that \algnametwo\ provides an unbiased estimate
of $C_M$, with bounded variance (the proof is in Supplement \ref{subsec:theoguaran_appendix}).
By applying Bennett's inequality, we derive the following.

\begin{theorem} \label{theo:bound_s}
    Given $\varepsilon \in \mathbb{R}^{+}, \eta \in (0,1)$ let $X_1, \dots, X_s$ be the random variables associated with the counts computed at iterations $1,\dots,s$, respectively, of \algnametwo. If $s \geq  \frac{(\Delta_{T,2} -1)}{(1+\varepsilon) \ln(1+\varepsilon) - \varepsilon} \ln \left( \frac{2}{\eta} \right) $, then it holds
    \begin{displaymath}
    \mathbb{P}\left( \left|\frac{1}{s}\sum_{i=1}^s X_i - C_{M}\right| \geq \varepsilon C_{M} \right) \leq \eta
    \end{displaymath}
    that is, \algnametwo~is an $(\varepsilon, \eta)$-approximation algorithm.
\end{theorem}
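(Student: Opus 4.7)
The plan is to mirror the Bennett-based argument used in the proof of Theorem \ref{theo:bound_s_RSRN}, substituting the \algnametwo-specific quantity $\tilde p_u$ for $p_u$ and tracking how the bound on $1/\tilde p_u$ propagates through the mean, range, and variance of the estimator. From the discussion preceding the theorem, $\mathbb{E}[C_M']=C_M$ and $\mathrm{Var}(C_M')$ admit analogues of Lemmas \ref{lemma:unbiased_estimate} and \ref{lemma:variance}; the statement indicates these are proved in the Supplement, so I will assume them and just state the corresponding numerical bounds.

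The key structural step is to pin down the worst case value of $\tilde p_u$. Since $\tilde r_u=|\{t:(x,y,t)\in E\land t\in[\max\{t_1,t_\ell^u-c\delta\},\min\{t_{last},t_1^u\}]\}|\ge 1$ (the timestamp $t_1^u$ itself is always a valid choice of $t_r$ that puts $u$ into $\mathcal{S}_i$), we get $\tilde p_u\ge 1/\Delta_{T,2}$, hence $1/\tilde p_u\le\Delta_{T,2}$ for every $u\in\mathcal{U}$. This immediately yields the uniform upper bound
\begin{displaymath}
X_i=\sum_{u\in\mathcal{U}}\frac{1}{\tilde p_u}\tilde X_u^i\le C_M\,\Delta_{T,2},
\end{displaymath}
so in Bennett's notation (Theorem \ref{theo:bennet}) we can take $M_i=C_M\Delta_{T,2}$ and $\mu_i=C_M$, giving $B=\max_i(M_i-\mu_i)=C_M(\Delta_{T,2}-1)$. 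The same Cauchy-type argument used in Eq.\ \eqref{eq:bound_squared_Xi}, with $p_{u_2}$ replaced by $\tilde p_{u_2}\ge 1/\Delta_{T,2}$, gives $\mathbb{E}[X_i^2]\le C_M^2\,\Delta_{T,2}$, so $\sigma_i^2\le C_M^2(\Delta_{T,2}-1)=:\hat\sigma_i^2$ and $v\le\hat v=C_M^2(\Delta_{T,2}-1)$.

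With these ingredients, set $t=\varepsilon C_M$. A direct computation gives
\begin{displaymath}
\frac{tB}{\hat v}=\varepsilon,\qquad \frac{\hat v}{B^2}=\frac{1}{\Delta_{T,2}-1},
\end{displaymath}
and Bennett's inequality (applied with the upper bound $\hat v$ in place of $v$, as justified for \algnameone\ in Supplement \ref{app:bennett_note}) then yields
\begin{displaymath}
\mathbb{P}\left(\left|\tfrac{1}{s}\sum_{i=1}^s X_i-C_M\right|\ge\varepsilon C_M\right)\le 2\exp\!\left(-s\,\frac{(1+\varepsilon)\ln(1+\varepsilon)-\varepsilon}{\Delta_{T,2}-1}\right).
\end{displaymath}
Requiring the right-hand side to be $\le\eta$ and solving for $s$ gives exactly the bound in the statement.

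The main obstacle I anticipate is not the calculation itself, which is essentially a transcription of the \algnameone\ proof, but rather justifying the substitution of $\hat v$ for $v$ in Bennett's inequality: Theorem \ref{theo:bennet} requires the \emph{exact} variance $\sigma_i^2$, and to use an upper bound one needs to verify that the function $(v/B^2)h(tB/v)$ is monotone in $v$ in the relevant regime (this is precisely what Supplement \ref{app:bennett_note} is cited for in the \algnameone\ proof, and the same remark applies verbatim here). Everything else is bookkeeping around replacing $\Delta_{T,1}/((c-1)\delta)$ by $\Delta_{T,2}$.
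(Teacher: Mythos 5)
Your proposal is correct and follows essentially the same route as the paper: the supplement's analysis of \algnametwo\ is exactly the \algnameone\ Bennett argument with $1/\tilde p_u\le\Delta_{T,2}$ replacing $1/p_u\le\Delta_{T,1}/((c-1)\delta)$, yielding $B=C_M(\Delta_{T,2}-1)$, $\hat v=C_M^2(\Delta_{T,2}-1)$, and the stated sample size after invoking the monotonicity remark of Supplement \ref{app:bennett_note}. One tiny refinement: your witness for $\tilde r_u\ge 1$ is not always $t_1^u$, since when $t_1^u>t_{last}$ the timestamp $t_1^u$ is not among the candidates $\{t_1,\dots,t_{last}\}$; in that case $t_{last}$ itself lies in $[\max\{t_1,t_\ell^u-c\delta\},\min\{t_{last},t_1^u\}]$ (because $t_{last}\ge t_m-c\delta\ge t_\ell^u-c\delta$), so the bound $\tilde p_u\ge 1/\Delta_{T,2}$ still holds and the rest of the argument is unaffected.
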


As for \algnameone, by using Bennett's inequality we obtain a linear dependence between the sample size $s$ and $\Delta_{T,2}$, while commonly used techniques (e.g., Hoeffding's inequality) lead to a quadratic dependence.


\subsection{\algname: Complexity Analysis}\label{subsec:complexity_analysis}In this section we analyse the time complexity of \algname's versions introduced in the previous sections.
 
Note that our algorithms employ an exact enumerator as subroutine. For the sake of the analysis we consider the complexity when the subroutine is the algorithm by Mackey et al.\ 
\cite{mackey_chronological_2018}, which we used in our implementation.
Let us first start with a definition, given a temporal network $T=(V,E)$, we denote with $\dim(T)$ the set $\{t_i | (x,y,t_i)\in E, i=1,\dots,m\}$. 
The algorithm in \cite{mackey_chronological_2018} has a worst case complexity $O(m \hat{\kappa}^{(\ell-1)})$ when executed on a motif with $\ell$ edges and a temporal network with $m$ temporal edges, where $\hat{\kappa}$ is the maximum number of edges within a window of length $\delta$, i.e.,   $\hat{\kappa} = \max\{|S(t)|: S(t) = \{(x,y,\bar{t}) \in E : \bar{t} \in [t,t+\delta]\}, t \in \dim(T)\}$. Observe that such complexity is  \emph{exponential} in the number of edges $\ell$ of the temporal motif. Obviously our algorithms benefit from any improvement to the state-of-the-art exact algorithms.

\textbf{Complexity of \algnameone.} The worst case complexity is $O(s\hat{m} \hat{\kappa}^{(\ell-1)} + sC_M^*)$ when executed sequentially, where $\hat{m}$ 
is the maximum number of edges in a window of length $c\delta$, 
i.e., $\hat{m} = \max \{|S(t)|: S(t) = \{(x,y,\bar{t}) \in E : \bar{t} \in [t,t+c\delta]\} , t \in \dim(T)\}$, and $C_M^*$ is the maximum number of $\delta$-instances contained in any window of length $c \delta$. This corresponds to the case where \algnameone~collects samples with many edges for which many $\delta$-instances occur. The complexity becomes $O(\frac{s}{\tau}\hat{m} \hat{\kappa}^{(\ell-1)} + \frac{s}{\tau} C_M^*)$ when \algnameone\ is executed in a parallel environment with $\tau$ threads (parallelizing the \texttt{for} cycle in Alg.\ \ref{alg:generalschema}).

\textbf{Complexity of \algnametwo.} Using the notation defined above, the worst-case complexity of a naive implementation of \algnametwo~is $O(s\hat{m} \hat{\kappa}^{(\ell-1)} + s\hat{m} C_M^*)$. As for \algnameone, the first term comes from the complexity of the exact algorithm for computing $\mathcal{S}_i, i=1,\dots,s$, whereas the second term is the worst case complexity cost of computing the weights for each sample. 
The additional $O(\hat{m})$ complexity of the second term arises from the computation of $\tilde{r}_u$ for each $u \in \mathcal{S}_i, i=1,\dots,s$. The complexity of this computation can be reduced to $O(\log(m))$ by applying binary search to the edges of $T$. With such an approach, we obtained a final complexity $O(s\hat{m} \hat{\kappa}^{(\ell-1)} + s \log(m) C_M^*)$.
The complexity reduces to $O(\frac{s}{\tau}\hat{m} \hat{\kappa}^{(\ell-1)} + \frac{s}{\tau} \log(m)  C_M^*)$ when $\tau$ threads are available for a parallel execution.

\section{Experimental evaluation}
\label{sec:experimentaleval}

In this section we present the results of our extensive experimental evaluation on large scale datasets. To the best of our knowledge we consider the largest dataset, in terms of number of temporal edges, ever used for the motif counting problem with more than 2.3 billion temporal edges.  
After describing the experimental setup and implementation (Sec.\ \ref{subsec:expsetup}), we first compare the quality of the estimated counts provided by \algname\ with the estimates from state-of-the-art sampling algorithms from Liu et al.~\cite{liu_sampling_2019}, and Wang et al.~\cite{wang2020efficient} (Sec.\ \ref{subsec:approxfact}). Then we compare the memory requirements of the algorithms (Sec.\ \ref{subsec:memory}), which may be a limiting factor for the analysis of very large datasets. Additional experiments on \algname's running time comparison with the exact algorithm by Mackey et al.~\cite{mackey_chronological_2018}, and with the current state of-the art sampling algorithms are in Supplement \ref{app:runningtime}. While results showing the linear scalability of     \algname's parallel implementation with the available number of processors can be found in Supplement \ref{app:scalability_suppl}.

\subsection{Experimental Setup and Implementation}\label{subsec:expsetup}
The characteristics of the datasets we considered are in Table~\ref{tab:datasets}. EquinixChicago \cite{noauthor_caida_nodate} is a \emph{bipartite} temporal network that we built, its description can be found in Supplement \ref{app:EquinixDataset}. 
Descriptions of the other datasets are in~\cite{paranjape_motifs_2017, hessel2016science, liu_sampling_2019}. 

We implemented our algorithms in C++14, using 
the algorithm 
by Mackey et al.~\cite{mackey_chronological_2018} as subroutine for 
the exact solution of the counting problem on samples.
We ran all experiments on a 64-core Intel Xeon E5-2698 with 512GB of RAM machine, with Ubuntu 14.04. The code we developed, and the dataset we build are entirely available at \url{https://github.com/VandinLab/PRESTO/}, links to additional resources (datasets and other implementations) are in Supplement \ref{app:links}.
We tested all the algorithms on the motifs from Figure~\ref{fig:motifs}, that are a temporal version of graphlets from \cite{prvzulj2004modeling}, often used in the analysis of static networks. 
We considered motifs with at most $\ell=4$ edges since the implementation by Wang et al.~\cite{wang2020efficient} does not allow for motifs with a higher $\ell$ in input. Since the EC dataset is a bipartite network, it can contain all but the motifs marked with a rectangle in Fig.\ \ref{fig:motifs}. Therefore, for such dataset we did not consider the motifs in the rectangles of Fig.\ \ref{fig:motifs}.
We compared our algorithms \algnameone\ and \algnametwo, collectively denoted as \algname, with 2 baselines: the sampling algorithm by Liu et al.~\cite{liu_sampling_2019}, denoted by \texttt{LS} and the sampling algorithm by Wang et al.~\cite{wang2020efficient}, denoted by \texttt{ES}.
\begin{table}[t]
    \centering
    \caption{Datasets used in our experimental evaluation. We report: number of nodes $n$; number of edges $m$; \emph{precision} of the timestamps; \emph{timespan} of the network.}
    \label{tab:datasets}
        \begin{tabular}{ccccl}
            \toprule
            Name& $n$&$m$& Precision & Timespan\\
            \midrule
            Stackoverflow (SO) & 2.58M & 47.9M & sec & 2774 (days)\\
            Bitcoin (BI) & 48.1M & 113M &sec & 2585 (days)\\
            Reddit (RE) & 8.40M & 636M & sec &3687 (days)\\
            EquinixChicago (EC) & 11.16M & 2.32B & $\mu$-sec &62.0 (mins)\\
            \bottomrule
        \end{tabular}
\end{table}

\begin{figure}[t]
    \centering
    \includegraphics[width=0.45\linewidth]{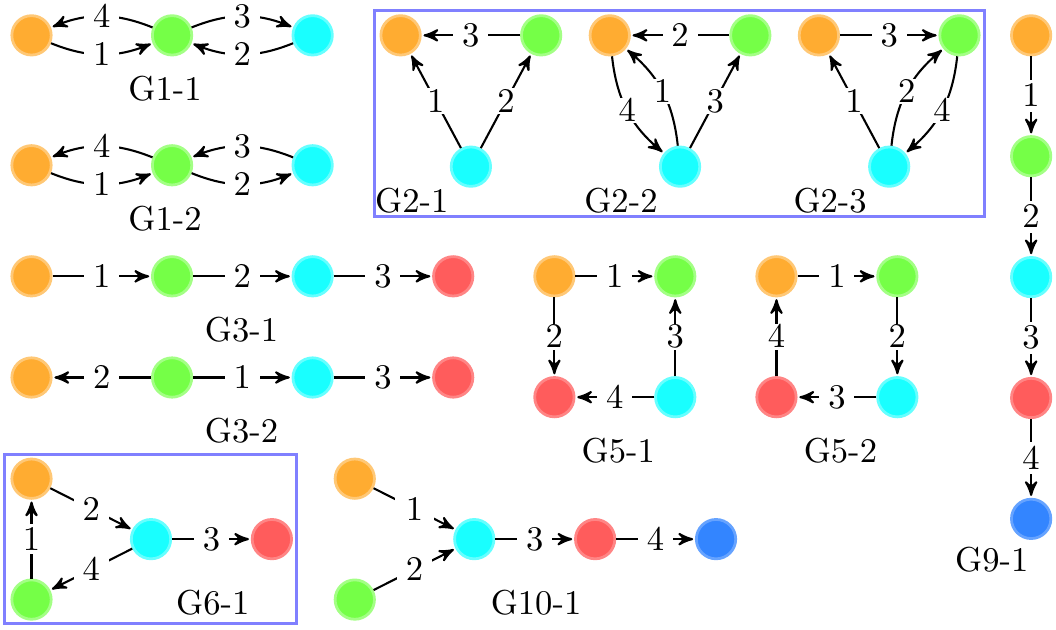}
    \caption{Motifs used in the experimental evaluation. The edge labels denote the edge order in $\sigma$ according to Definition \ref{defn:temporal-motif}.}
    \label{fig:motifs}
\end{figure}

\begin{table*}[t]
        \centering
        \caption{Comparison of the relative approximation error and standard deviation on datasets SO and BI (see Section~\ref{subsec:expsetup}). For each combination of dataset and motif we highlight the algorithm with minimum MAPE value.}
        \label{tab:approxSoBi}
        \scalebox{0.75}{
        \begin{tabular}{*{11}{c}}
            \toprule
            & \multicolumn{5}{c}{SO} & \multicolumn{5}{c}{BI}\\ 
            \cmidrule(lr){2-6}
            \cmidrule(lr){7-11}
            &  &\multicolumn{4}{c}{Approximation Error}& &\multicolumn{4}{c}{Approximation Error}\\ 
            \cmidrule(lr){3-6}
            \cmidrule(lr){8-11}
            &  &\multicolumn{2}{c}{\algname} & & & &\multicolumn{2}{c}{\algname} & &\\ 
            \cmidrule(lr){3-4}
            \cmidrule(lr){8-9}
            \textbf{Motif} & $C_M$ & \texttt{A} & \texttt{E} & \texttt{LS} &\texttt{ES} & $C_M$ & \texttt{A} & \texttt{E} & \texttt{LS} &\texttt{ES}\\ 
            \midrule
            G10-1 & 7.8$\cdot 10^{8}$ & \textbf{4.1$\pm$2.4\%} & 4.2$\pm$2.0\% & 4.2$\pm$2.9\%  & 12.9$\pm$9.3\%&   1.2$\cdot 10^{10}$ & 14.2$\pm$8.7\% & \textbf{13.1$\pm$6.1\%} & 25.8$\pm$10.2\%  & 16.9$\pm$16.9\%\\
            G1-1 & 2.1$\cdot 10^{6}$ & 3.3$\pm$0.8\% & \textbf{1.9$\pm$0.8\%} & 7.8$\pm$3.5\%  & 23.3$\pm$11.3\%&  6.8$\cdot 10^{7}$ & \textbf{7.1$\pm$4.2\%} & 15.8$\pm$9.2\% & 21.2$\pm$12.8\%  & 44.9$\pm$30.6\%\\
            G1-2 & 8.3$\cdot 10^{5}$ & 6.6$\pm$2.5\% & \textbf{2.9$\pm$2.7\%} & 8.7$\pm$4.4\%  & 62.0$\pm$19.0\%&  5.7$\cdot 10^{7}$ & \textbf{9.3$\pm$4.1\%} & 15.5$\pm$10.0\% & 25.1$\pm$15.6\%  & 59.7$\pm$29.4\%\\
            G2-1 & 7.7$\cdot 10^{5}$ & 4.4$\pm$1.6\% & 3.9$\pm$2.4\% & 8.6$\pm$4.3\%  & \textbf{1.6$\pm$0.9\%}&    2.3$\cdot 10^{6}$ & \textbf{5.0$\pm$3.0\%} & 7.1$\pm$3.9\% & 21.4$\pm$7.3\%  & 30.5$\pm$15.3\%\\
            G2-2 & 3.5$\cdot 10^{5}$ & \textbf{3.2$\pm$3.0\%} & 4.6$\pm$3.2\% & 9.7$\pm$2.8\%  & 53.4$\pm$16.4\%&  4.9$\cdot 10^{5}$ & 11.0$\pm$3.7\% & 26.8$\pm$10.9\% & 34.6$\pm$26.1\%  & \textbf{3.4$\pm$2.0\%}\\
            G2-3 & 7.0$\cdot 10^{5}$ & \textbf{9.3$\pm$6.8\%} & 9.4$\pm$5.1\% & 10.1$\pm$5.5\%  & 60.7$\pm$28.4\%& 9.9$\cdot 10^{5}$ & \textbf{7.9$\pm$4.9\%} & 22.7$\pm$6.6\% & 19.6$\pm$10.2\%  & 46.9$\pm$12.4\%\\
            G3-1 & 2.3$\cdot 10^{8}$ & \textbf{1.6$\pm$0.9\%} & 2.7$\pm$1.4\% & 4.6$\pm$2.8\%  & 9.8$\pm$4.8\%&    7.3$\cdot 10^{8}$ & 2.5$\pm$1.2\% & \textbf{2.2$\pm$1.9\%} & 21.9$\pm$2.5\%  & 13.7$\pm$7.0\%\\
            G3-2 & 2.3$\cdot 10^{8}$ & 2.9$\pm$1.8\% & \textbf{1.8$\pm$1.3\%} & 5.3$\pm$2.8\%  & 11.3$\pm$7.5\%&   5.7$\cdot 10^{8}$ & 3.6$\pm$1.4\% & \textbf{1.7$\pm$1.2\%} & 21.8$\pm$3.1\%  & 17.6$\pm$8.5\%\\
            G5-1 & 8.0$\cdot 10^{5}$ & \textbf{4.0$\pm$2.1\%} & 4.2$\pm$1.6\% & 5.0$\pm$2.9\%  & 22.3$\pm$7.0\%&   5.1$\cdot 10^{6}$ & \textbf{14.5$\pm$8.9\%} & 17.6$\pm$4.4\% & 16.9$\pm$9.3\%  & 32.4$\pm$18.8\%\\
            G5-2 & 5.0$\cdot 10^{5}$ & \textbf{4.2$\pm$2.4\%} & 5.0$\pm$2.4\% & 5.4$\pm$3.0\%  & 34.4$\pm$15.9\%&  1.3$\cdot 10^{6}$ & 23.2$\pm$9.0\% & \textbf{17.6$\pm$4.8\%} & 23.3$\pm$21.9\%  & 33.8$\pm$19.2\%\\
            G6-1 & 2.0$\cdot 10^{6}$ & 6.4$\pm$4.0\% & \textbf{6.1$\pm$4.1\%} & 12.7$\pm$3.6\%  & 47.6$\pm$11.5\%& 1.0$\cdot 10^{7}$ & \textbf{7.0$\pm$4.6\%} & 16.7$\pm$9.4\% & 19.4$\pm$14.3\%  & 37.3$\pm$13.4\%\\
            G9-1 & 4.1$\cdot 10^{8}$ & 4.4$\pm$1.7\% & \textbf{2.5$\pm$0.7\%} & 6.0$\pm$2.9\%  & 10.5$\pm$8.3\%&   5.8$\cdot 10^{8}$ & \textbf{2.3$\pm$2.3\%} & 3.3$\pm$1.6\% & 26.7$\pm$8.1\%  & 11.8$\pm$7.1\%\\
            \bottomrule
        \end{tabular}
}
\end{table*}
\begin{table}[t]
        \centering
        \caption{Approximation results on datasets RE and EC.}
        \label{tab:approxRe}
        \scalebox{0.75}{
        \begin{tabular}{*{10}{c}}
            \toprule
            & \multicolumn{5}{c}{RE} & \multicolumn{4}{c}{EC}\\ 
            \cmidrule(lr){2-6}
            \cmidrule(lr){7-10}
            &  &\multicolumn{4}{c}{Approximation Error}& &\multicolumn{3}{c}{Approximation Error}\\ 
            \cmidrule(lr){3-6}
            \cmidrule(lr){8-10}
            &  &\multicolumn{2}{c}{\algname} & & & &\multicolumn{2}{c}{\algname} &\\ 
            \cmidrule(lr){3-4}
            \cmidrule(lr){8-9}
            \textbf{Motif} & $C_M$ & \texttt{A} & \texttt{E} & \texttt{LS} &\texttt{ES} & $C_M$ & \texttt{A} & \texttt{E} & \texttt{LS}\\ 
            \midrule
            G10-1  & 5.8$\cdot 10^{9}$ & 14.5$\pm$6.7\% & \textbf{9.6$\pm$6.9\%} & 14.1$\pm$4.4\%  & 32.8$\pm$13.8\%&  6.3$\cdot 10^{10}$ & \textbf{2.8$\pm$2.8\%} & 4.2$\pm$3.2\% & 4.7$\pm$1.9\%\\
            G1-1  & 1.5$\cdot 10^{8}$ & 25.3$\pm$9.6\% & 25.2$\pm$15.2\% & 30.7$\pm$13.0\%  & \textbf{17.1$\pm$6.5\%}&  1.5$\cdot 10^{11}$ & 7.6$\pm$4.1\% & \textbf{5.1$\pm$3.4\%} & 8.0$\pm$3.7\%\\
            G1-2  & 1.2$\cdot 10^{8}$ & 43.8$\pm$15.5\% & \textbf{24.6$\pm$8.2\%} & 31.3$\pm$20.9\%  & 65.3$\pm$10.2\%& 1.5$\cdot 10^{11}$ & 6.5$\pm$4.1\% & \textbf{4.6$\pm$2.4\%} & 7.8$\pm$3.0\%\\
            G2-1  & 3.3$\cdot 10^{7}$ & 7.4$\pm$2.7\% & \textbf{4.3$\pm$2.7\%} & 8.4$\pm$5.3\%  & 20.1$\pm$9.5\%& -&- &- &-\\
            G2-2  & 2.9$\cdot 10^{7}$ & 39.4$\pm$23.6\% & 46.1$\pm$11.9\% & 33.4$\pm$13.4\%  & \textbf{9.8$\pm$4.7\%}& -&- &- &-\\
            G2-3  & 9.4$\cdot 10^{7}$ & 18.4$\pm$7.3\% & 16.9$\pm$7.6\% & \textbf{16.7$\pm$5.3\%}  & 32.6$\pm$14.4\%& -&- &- &-\\
            G3-1  & 4.2$\cdot 10^{8}$ & 3.1$\pm$1.7\% & \textbf{1.8$\pm$1.1\%} & 3.3$\pm$1.2\%  & 10.2$\pm$4.7\%& 1.6$\cdot 10^{10}$ & \textbf{1.9$\pm$1.5\%} & 3.5$\pm$1.8\% & 3.0$\pm$1.8\%\\
            G3-2  & 5.3$\cdot 10^{8}$ & 2.7$\pm$1.3\% & \textbf{1.2$\pm$1.1\%} & 3.1$\pm$1.3\%  & 7.5$\pm$3.3\%& 1.6$\cdot 10^{10}$ & \textbf{2.6$\pm$1.0\%} & 2.8$\pm$2.0\% & 3.7$\pm$2.2\%\\
            G5-1  & 8.2$\cdot 10^{7}$ & 8.1$\pm$2.6\% & \textbf{7.0$\pm$3.8\%} & 8.2$\pm$6.0\%  & 31.9$\pm$9.8\%& 3.4$\cdot 10^{11}$ & 10.4$\pm$7.0\% & \textbf{8.2$\pm$4.6\%} & 14.9$\pm$5.4\%\\
            G5-2  & 1.7$\cdot 10^{7}$ & 28.1$\pm$6.3\% & \textbf{13.8$\pm$6.6\%} & 25.9$\pm$10.9\%  & 52.0$\pm$15.7\%& 3.8$\cdot 10^{10}$ & 14.2$\pm$5.4\% & \textbf{4.5$\pm$4.0\%} & 9.0$\pm$5.0\%\\
            G6-1  & 5.5$\cdot 10^{7}$ & 33.2$\pm$17.6\% & 30.7$\pm$7.3\% & \textbf{20.0$\pm$9.7\%}  & 50.0$\pm$16.8\%& -&- &- & -\\
            G9-1  & 6.0$\cdot 10^{8}$ & \textbf{3.8$\pm$2.6\%} & 5.4$\pm$2.0\% & 4.5$\pm$2.8\%  & 19.9$\pm$18.8\%& 8.7$\cdot 10^{10}$ & \textbf{6.3$\pm$3.8\%} & 6.8$\pm$2.2\% & 9.9$\pm$4.6\%\\
            \bottomrule
        \end{tabular}
}
\end{table}

\subsection{Quality of Approximation}\label{subsec:approxfact}
We start by comparing the approximation provided by \algname\ with the current state-of-the-art sampling algorithms.
To allow for a fair comparison, since the algorithms we consider depend on parameters that are not directly comparable and that influence both the algorithm's running time and approximation, we run all the algorithms sequentially and measure the approximation they achieve by fixing their running time (i.e., we fix the parameters to obtain the same running time).

While \algname~can be trivially modified in order to work within a user-specified time limit, the same does not apply to \texttt{LS} or \texttt{ES}. Thus, to fix the same running time for all algorithms we devised the following procedure: i) for a fixed dataset and $\delta$, we found the parameters of \texttt{LS} such its runtime is at least one order of magnitude\footnote{We use such threshold since sampling algorithms are often required to run within a small fraction of time w.r.t.\ exact algorithms.} smaller than the exact algorithm by Mackey et al. \cite{mackey_chronological_2018} on motif G1-1 ii) we run \texttt{LS} on all the motifs on the fixed dataset with the parameters set as from i) and measure the time required for each motif; iii) we run \texttt{ES} such that it requires the same time\footnote{\texttt{ES} needs to preprocess each dataset before executing the sampling procedure. To account for this preprocessing step, we measured the time  for preprocessing as geometric average across ten runs, and added to \texttt{ES}'s running time such value divided by the number of motifs (i.e., 12).} as \texttt{LS}; iv) we ran \algnameone~and \algnametwo~with $c=1.25$ and a time limit that is the \emph{minimum} between the running time of \texttt{LS} and the one of \texttt{ES}. The parameters resulting from such procedure for all algorithms are in Supplement \ref{app:parameters}. 
Further discussion on how to set $c$ when running \algname\ is in Supplement \ref{app:cvalue}.

For each configuration, we ran each algorithm ten times, with a limit of 400 GB of (RAM) memory. We computed the so called Mean Average Percentage Error (MAPE) 
after removing the best and worst results, to remove potential outliers. 
MAPE is defined as follows: let $C_M'$  be the estimate of the count $C_M$ produced by one run of an algorithm, then  the relative error of such estimate is $|C_M' - C_M|/C_M$; the MAPE of the estimates is the average of the relative errors in percentage. In addition, we also computed the standard deviations of the estimates in output to the algorithms for each configuration, which will be reported to complement the mean values, small standard deviations correspond to more stable estimates of the algorithms around their MAPE values.

Table \ref{tab:approxSoBi} shows the MAPE values of the sampling algorithms on datasets SO, and BI.
On the SO dataset we observe that both versions of \algname\ provide more accurate estimates over current state-of-the-art sampling algorithms. \algnameone\ provides the best approximation on 6 out of 12 motifs, with \algnametwo\ providing the best results in all the other cases except for motif G2-1. Furthermore, both variants of \algname\ improve on all motifs w.r.t.\ \texttt{LS} and on 11 out of 12 motifs w.r.t.\ \texttt{ES}, with the error from \texttt{ES} being more than three times the error of \algname\ on such motifs. We note \algnameone\ and \algnametwo\ achieve similar results, supporting the idea that similar performances are obtained when the network's timestamps are distributed evenly, that is the case for such dataset (see Supplement \ref{app:edge_distribution}).

For the BI dataset the results are similar to the ones for the SO dataset. \algnameone\ achieves the lowest approximation error on 7 out of 12 motifs with \algnametwo\ achieving the lowest approximation error on all the other motifs but G2-2. \algnameone\ achieves better estimates than \texttt{LS} on all motifs and on 11 out of 12 motifs w.r.t.\ \texttt{ES}, \algnametwo\ behaves similarly, except that it improves over \texttt{LS} on 10 out of 12 motifs.

Table \ref{tab:approxRe} instead shows the results on the RE and EC datasets. On the former, \algnametwo\ achieves the lowest approximation error on 7 out of 12 motifs and it improves over \algnameone\ for most of the motifs. This is not surprising, since \cite{wang2020efficient} showed that RE has a more skewed edge distribution (see also the discussion in Supplement \ref{app:edge_distribution}), a scenario for which we expect \algnametwo\ to improve over \algnameone\ (see Section \ref{sec:RSRT}). Nonetheless, \algnameone\ achieves lower approximation error over 6 motifs when comparing to \texttt{LS} and on 10 motifs w.r.t.\ \texttt{ES}, while \algnametwo~achieves the best approximation error on most of the motifs, improving over \texttt{LS} on 8 motifs and over \texttt{ES} on 10 motifs.

Finally, we discuss the results on the EC dataset (again in Table \ref{tab:approxRe}), which is a 2.3 billion edges \emph{bipartite} temporal network and thus it does not contain the motifs marked with a rectangle in Figure \ref{fig:motifs}.
Note that the results for \texttt{ES} are missing, since \texttt{ES} did not complete any run with 400GB of memory on the motifs tested. (We discuss the high memory usage of \texttt{ES} in Section \ref{subsec:memory}).
On such dataset both variants of \algname\ perform better than \texttt{LS} on 7 out 8 motifs and achieve the lowest approximation error on 4 out 8 motifs each. Interestingly, the standard deviations of \algname\ are often similar or lower than the one of \texttt{LS}, which is another of the advantages of our algorithms. We observed similar trends, and even smaller variances for \algname\ across all the experiments on the other datasets (see Tables \ref{tab:approxSoBi}, \ref{tab:approxRe}).

These results, coupled with the theoretical guarantees of Section~\ref{sec:algs}, show that \algname\ outperforms the state-of-the-art sampling algorithms \texttt{LS} and \texttt{ES} for the estimation of most of the motif counts on temporal networks. 
Based on the results we discussed, \algnametwo\ seems also to usually provide similar estimates to \algnameone, 
while significantly improving over \algnameone\ when the network edges are not uniformly distributed such the \texttt{EC} dataset.
\begin{table}[t]
    \centering
    \caption{Minimum and peak memory usage in GB over all motifs in Figure \ref{fig:motifs}, \enquote{\ding{55}} denotes out of memory.}
    \label{tab:memorypeak}
        \begin{tabular}{*{5}{c}}
            \toprule
            \textbf{Dataset} & \algnameone & \algnametwo & \texttt{LS} & \texttt{ES}\\
            \midrule
            SO & 1.5 - 1.5 & 1.5 - 1.5 & 1.5 - 1.5 & 9.7 - 10.5\\
            BI & 3.5 - 3.5 & 3.5 - 3.7 & 3.5 - 4.2 & 28.8 - 29.5\\
            RE & 19.5 - 19.8 & 19.5 - 19.8 & 19.5 - 20.8 & 129.9 - 141.6\\
            EC & 71.0 - 71.0 & 71.0 - 71.0 & 71.0 - 71.0 & \ding{55}\\
            \bottomrule
        \end{tabular}
\end{table}

\subsection{Memory usage}\label{subsec:memory}
In this section we discuss the RAM usage of the various sampling algorithms. The results are in Table \ref{tab:memorypeak}, which shows for each algorithm the minimum and maximum amount of (RAM) memory used over all motifs in Figure \ref{fig:motifs} (collected on a single run). Both \algname's versions have lower memory requirements than \texttt{ES}, and equal or lower memory requirements than \texttt{LS}, on all datasets. \texttt{ES} ranges from requiring 6.5$\times$ up to 8.5$\times$ the memory used by \algname, making \texttt{ES} not practical for very large datasets such as EC (where \texttt{ES} did not completed any of the runs since it exceeded the 400GB of memory allowed for its execution). While \texttt{LS} and \algname\ have similar memory requirements, \texttt{LS} displays greater variance on BI and RC, which may be due to the fact that choosing windows with more edges, as done by \texttt{LS}, may require more memory to run the exact algorithm as subroutine.

\algname~is, thus, a memory efficient algorithm and hence a very practical tool to tackle the motif counting problem on temporal networks with billions edges. 

\section{Conclusions}
\label{sec:concl}
In this work we introduced \algname, a simple yet practical algorithm for the rigorous approximation of temporal motif counts. Our extensive experimental evaluation shows that \algname\ provides more accurate results and is much more scalable than the state-of-the-art sampling approaches.
There are several interesting directions for future research, including 
developing rigorous techniques for mining several motifs simultaneously and developing efficient algorithms to maintain estimates of motif counts in adversarial streaming scenarios.



\appendix
\newpage
\section*{Supplementary material.}


\section{Note on Bennett's Inequality} \label{app:bennett_note}
In this section we will discuss how to apply Bennett's inequality when the exact variance term, i.e.\ $v$, from Theorem \ref{theo:bennet} is unknown and we only have an upper bound $\hat{v}$ to such term (which is our case in practice).

Let us consider the function $f(v) = \alpha v \,h(\gamma/v)$ with $\alpha = s/B^2$, $\gamma = tB$ where $h(\cdot)$ is the same function $h(\cdot)$ appearing in Bennett's inequality (Theorem \ref{theo:bennet}). With a somehow tedious analysis one can prove that the function $f(v)$ is continuous and monotonic non increasing for every $v, \gamma > 0$ thus by definition, given $v_1 \ge v_2 > 0$ it holds $f(v_1)\le f(v_2)$. Observe that we can write the right-hand side of the Bennett's bound as $2\exp(-f(v))$, now is clear that given $v_1 \ge v_2 > 0$ it holds $2\exp(-f(v_2)) \le 2\exp(-f(v_1))$ by the property of negative exponential functions. This observation allows us to apply Bennett's inequality using an upper bound $\hat v$ to $v$. 

By the above observation we also get the intuition that a smaller gap between $\hat v$, the upper bound to the variance, and $v$ the actual variance will result in a tighter bound to the sample size $s$ (i.e., Bennett's inequality will provide more strict bounds). In our work we focused on a tradeoff between the sharpness of the bound $\hat v$ and its efficient computability in practice, since sharper bounds on $\hat v$ (than the one we provide) could be proved but they are much harder to compute and thus may not be of practical interest. We still leave for future works the space of improving the bounds in this work.

\section{\algnametwo\ - Theoretical Guarantees}\label{subsec:theoguaran_appendix}
In this section we will prove that, for an appropriate choice of the number $s$ of samples,
\algnametwo~is an $(\varepsilon, \eta)$-approximation algorithm. The analysis is similar to the one of \algnameone.

We define the indicator random variables, for each $u \in \mathcal{U}(M, \delta)$ and for $i=1, \dots,s$: $\tilde{X}_{u}^{i}=1 \text{ if $u \in \mathcal{S}_i$, 0 otherwise.}$
Each variable $\tilde{X}_u^i, i=1,\dots,s\, , u\in \mathcal{U}$ is a Bernoulli random variable for which it holds
\begin{equation}\label{eq:prob_x_u_i}
\mathbb{P}(\tilde{X}_{u}^{i} = 1) = \mathbb{P}(u \in \mathcal{S}_i) =
\frac{\tilde{r}_{u}}{\Delta_{T,2}} = \tilde{p}_u.
\end{equation}
With a similar analysis to the one for \algnameone, 
the following results hold.
\begin{Lemma}
    $ C_M' = \frac{1}{s} \sum_{i=1}^s X_i$ is an unbiased estimator for $C_M$, that is $\mathbb{E}[C_M']=C_M$. Where $X_i = \sum_{u \in \mathcal{U}}1/\tilde{p}_u \tilde{X}_{u}^{i}$.
\end{Lemma}
\begin{Lemma}
    \label{lem:alg2sample}
    \begin{displaymath}
    \mathrm{Var}\left(C_M'\right) = \mathrm{Var}\left(\frac{1}{s} \sum_{i=1}^{s} X_i\right) \leq \frac{C_M^2}{s}\left( \Delta_{T,2} - 1 \right)
    \end{displaymath}
\end{Lemma}

By applying Bennett's inequality (Theorem~\ref{theo:bennet}), we derive the following bound on the sample size $s$ which is also reported in the main text.
\begin{theorem} \label{theo:bound_s_app}
    Given $\varepsilon \in \mathbb{R}^{+}, \eta \in (0,1)$ let $X_1, \dots, X_s$ be the random variables associated with the counts at iteration $i=1,\dots,s$ of \algnametwo. If $s \geq  \frac{(\Delta_{T,2} -1)}{(1+\varepsilon) \ln(1+\varepsilon) - \varepsilon} \ln \left( \frac{2}{\eta} \right) $, then it holds
    \begin{displaymath}
    \mathbb{P}\left( \left|\frac{1}{s}\sum_{i=1}^s X_i - C_{M}\right| \geq \varepsilon C_{M} \right) \leq \eta
    \end{displaymath}
    that is, \algnametwo~is an $(\varepsilon, \eta)$-approximation algorithm.
\end{theorem}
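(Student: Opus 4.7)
The plan is to mirror the proof of Theorem~\ref{theo:bound_s_RSRN} for \algnameone, adapting the quantities to reflect the edge-restricted sampling of \algnametwo. First I would invoke the analogue of Lemma~\ref{lemma:unbiased_estimate} (stated as the first lemma of Supplement~\ref{subsec:theoguaran_appendix}) to get $\mathbb{E}[X_i] = C_M$ for each $i=1,\dots,s$, so that $\frac{1}{s}\sum_i \mathbb{E}[X_i] = C_M$, which is the centering quantity required to apply Bennett's inequality (Theorem~\ref{theo:bennet}).

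Next I would derive the deterministic upper bound on each $X_i$ needed to pin down the parameter $B$. Since $\tilde{r}_u \ge 1$ for every $u \in \mathcal{U}$ (there must exist at least one admissible starting timestamp making $u$ appear in $\mathcal{S}_i$, otherwise $u$ would not be in $\mathcal{U}$), we have $1/\tilde{p}_u = \Delta_{T,2}/\tilde{r}_u \le \Delta_{T,2}$. Summing over $u \in \mathcal{U}$ when all $\tilde{X}_u^i=1$ gives $X_i \le C_M \Delta_{T,2}$, so $B = \max_i M_i - \mu_i \le C_M(\Delta_{T,2}-1)$. The variance upper bound $\hat{v} = C_M^2(\Delta_{T,2}-1)$ comes directly from Lemma~\ref{lem:alg2sample} (and from the discussion in Supplement~\ref{app:bennett_note}, we may legitimately use an upper bound on $v$ inside Bennett's inequality since the relevant function is monotone nonincreasing in $v$).

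Then I would set $t = \varepsilon C_M$ and compute the two dimensionless quantities appearing in Bennett's bound:
\begin{displaymath}
\frac{tB}{\hat v} = \frac{\varepsilon C_M \cdot C_M(\Delta_{T,2}-1)}{C_M^2(\Delta_{T,2}-1)} = \varepsilon,\qquad \frac{\hat v}{B^2} = \frac{1}{\Delta_{T,2}-1}.
\end{displaymath}
Plugging these into Theorem~\ref{theo:bennet} with $h(\varepsilon) = (1+\varepsilon)\ln(1+\varepsilon)-\varepsilon$ yields
\begin{displaymath}
\mathbb{P}\!\left(\Big|\tfrac{1}{s}\sum_{i=1}^s X_i - C_M\Big| \ge \varepsilon C_M\right) \le 2\exp\!\left(-\frac{s\, h(\varepsilon)}{\Delta_{T,2}-1}\right).
\end{displaymath}
Finally, forcing the right-hand side to be $\le \eta$ and solving for $s$ gives exactly the sample size stated in the theorem, completing the argument.

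The main obstacle I anticipate is the subtle point, already flagged by the authors in Supplement~\ref{app:bennett_note}, that Bennett's inequality as literally stated requires the exact variance $v$, whereas we only have access to an upper bound $\hat v$ through Lemma~\ref{lem:alg2sample}. The justification that substituting $\hat v$ for $v$ preserves the bound (monotonicity of $v \mapsto v\, h(tB/v)$) is the only nontrivial technical step; once this is in hand the remaining algebra is routine. Everything else—unbiasedness, the deterministic range, and the final inversion of the exponential bound—is entirely parallel to the proof of Theorem~\ref{theo:bound_s_RSRN}, with $\Delta_{T,2}$ playing the role of $\Delta_{T,1}/((c-1)\delta)$.
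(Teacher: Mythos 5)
Your proposal is correct and follows exactly the route the paper intends: the paper only sketches this result in Supplement~\ref{subsec:theoguaran_appendix} by saying the analysis mirrors that of \algnameone\ and then applying Bennett's inequality, and your argument fills in precisely those steps (unbiasedness, $1/\tilde{p}_u \le \Delta_{T,2}$ via $\tilde{r}_u \ge 1$, the variance bound from Lemma~\ref{lem:alg2sample}, $t=\varepsilon C_M$, and the monotonicity justification of Supplement~\ref{app:bennett_note} for using $\hat v$ in place of $v$), matching the paper's proof of Theorem~\ref{theo:bound_s_RSRN} with $\Delta_{T,2}-1$ in place of $\Delta_{T,1}/((c-1)\delta)-1$. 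No gaps.
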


\section{Description of the Equinix-Chicago Dataset}\label{app:EquinixDataset}
In this section, we briefly describe the Equinix-Chicago used in our experimental evaluation.

We built the Equinix-Chicago dataset\footnote{We used data available at \url{https://www.caida.org/data/passive/passive_2011_dataset.xml} publicly available under request.} starting from the internet packets collected by two monitors situated at Equinix data center, which captured the packets from Chicago to Seattle and vice versa on the 17 February 2011 for an observation period of 62 minutes.
Each edge $(x,y,t)$ represents a packet sent at time $t$, with microseconds precision, from the IPv4 address $x$ to the IPv4 destination $y$\footnote{We filtered packets which used different protocols, e.g., IPv6. Further, we did not assigned nodes at port level but at IP address level (e.g, let \texttt{addr} be an IPv4 address, then the edges corresponding to two packets sent at different ports: \texttt{addr}:80, \texttt{addr}:22 are mapped on the same endpoint node corresponding to \texttt{addr}).}. Such network is thus a \emph{bipartite} temporal network since edges, i.e., packet exchanges, occur only between nodes belonging to different cities, with no interactions captured among nodes of the same city. The final temporal network has more than 2.325 billion temporal edges and 11.157 million nodes. 

This dataset is publicly available at \url{https://github.com/VandinLab/PRESTO/}.

\section{Reproducibility - Links to External Resources}\label{app:links}
For reproducibility purposes, in this section we report the links to the datasets and implementations of \texttt{LS} and \texttt{ES} we used. 
\begin{itemize}
    \item Dataset SO is available at: \textcolor{black}{\url{http://snap.stanford.edu/temporal-motifs/data.html}};
    \item Datasets RE and BI are available at \textcolor{black}{\url{http://www.cs.cornell.edu/~arb/data/}};
    \item The current implementation of the \texttt{LS} algorithm is available at: \textcolor{black}{\url{https://gitlab.com/paul.liu.ubc/sampling-temporal-motifs}};
    \item The implementation of the \texttt{ES} algorithm is available at: \textcolor{black}{\url{https://github.com/jingjing-hnu/Temporal-Motif-Counting}}.
\end{itemize}

\section{Reproducibility - Parameters used in the experimental evaluation} \label{app:parameters}

We now discuss discuss the choice of the parameters resulting from our procedure for comparing the different sampling algorithms in Section \ref{subsec:approxfact}.

Note that our algorithms \algnameone\ and \algnametwo\ have only 1 tuning parameter in the experimental setting of Section \ref{subsec:approxfact}, since $\delta$ is provided in input (i.e., $c$, for which we also discuss how to set it to obtain the maximum efficiency in Supplement \ref{app:cvalue}). The same holds for \texttt{ES} which has the paramer $p$, that controls the fraction of sampled edges (see Section \ref{sec:relatedwork}). While \texttt{LS} has instead non trivial dependencies among the parameters $b$ and $r$ which are not entirely easy to set when it comes to different values of $c$ and $\delta$ also due to the lack of a qualitative theoretical analysis relating the parameters to the approximation quality of the results. 

Table \ref{tab:datasets_params} reports all the parameters used on the different datasets.
Given the procedure for running \algname's versions (described in Section \ref{subsec:approxfact}), their sample size $s$ is not deterministic, since only the execution time is fixed. In Table \ref{tab:datasets_params} we report the average sample sizes $s_{\text{\algnameone}}$ and $s_{\text{\algnametwo}}$, over all the runs on all the motifs on the given network, of \algnameone\ and \algnametwo\ respectively. Recall that the values of $p$ for \texttt{ES} where adapted to each motif on each configuration following the approach described in Section \ref{subsec:approxfact} 
selecting such values from a grid of values that is, we selected the best $p$ in order to have a final running time of \texttt{ES} to be close to the one of \texttt{LS} choosing $p$ from the following grid $[10^{-6}, 5\cdot 10^{-6}, 10^{-5}, 5\cdot10^{-5}, 10^{-4}, 5\cdot 10^{-4}, 10^{-3}, 2 \cdot 10^{-3}, 5 \cdot 10^{-3}, 10^{-2} ]$. Thus, we report in Tab.\ \ref{tab:datasets_params} the range from which $p$ was chosen in practice as result of our procedure for fixing the parameters.

\begin{table}[t]
    \centering
    \caption{Parameters used in the comparison of \texttt{LS}, \texttt{ES}, \algnameone\ and \algnametwo. For the parameter $p$ we report the range since it's value depends on the specific motif (see Section \ref{subsec:approxfact}). We further discuss in Supplement \ref{app:cvalue} the motivation for our choice of $c$.}
    \label{tab:datasets_params}
    \scalebox{0.9}{
        \begin{tabular}{cccccccc}
            \toprule
            Dataset&$c$&$\delta$& $r$ & $b$ &$p$ &$s_{\text{\algnameone}}$ & $s_{\text{\algnametwo}}$ \\
            \midrule
            SO & 1.25 & 86400 & 30 & 5 & $10^{-[5,3]}$ &230 & 158\\
            BI & 1.25 & 3200 & 175 & 5 & $10^{-[5,3]}$ &3311 & 905\\
            RE & 1.25 & 3200 & 400 &  5& $10^{-[5,3]}$ &9378 & 2061\\
            EC & 1.25 & 100000 & 250 & 5 & $10^{-[5,3]}$ &1530 & 1168\\
            \bottomrule
        \end{tabular}
    }
\end{table}
  
\section{On the choice of $c$} \label{app:cvalue}
In this section we briefly discuss how to choose the value of $c$ when running \algname.
 
The major advantages of \algname, come from the efficiency in time, small memory usage and from the scalability it achieves, which coupled with the theoretical guarantees provided and the precise estimates achieved in practice, make our algorithm a fundamental tool when analysing large temporal networks, it is thus important to set \algname's parameter (i.e., $c$) correctly to exploit such features at their best. Our versions of \algname\ have only the parameter $c$ to be setted, since $\delta$ is part of the problem's input. We now discuss the importance of keeping $c$ small (e.g., $c <2$). To do this, we consider a concrete example to show that a small $c$ leads to more efficiency and scalability. Suppose we want to approximate the count of motif G5-1 from Figure \ref{fig:motifs} 
on EC dataset\footnote{This is an arbitrary example, similar results can be observed on almost all the motifs and datasets.} and we want to cover with samples a fixed length of the interval $[t_1,t_m]$ over the $s$ iterations, i.e., $\sum_{i=1}^s |t_{r}^i + c \delta - t_r^i| = sc \delta = V$ where $[t_r^i, t_r^i + c \delta]$ is the temporal window sampled by \algnameone\footnote{Similar observations hold for \algnametwo.} at iteration $i=1,\dots,s$ and $V$ is the fixed value, $V$ can be interpreted as the summation of the durations of each sample ($c\delta$) over all the samples (i.e., how much time of $[t_1,t_m]$ over the $s$ iterations \algname\ examines). Fixed $V$ and $\delta$, one can explore different values of $c$ which thus let only one possibility to set $s$, by solving the equation $s = V/c\delta$. Table \ref{tab:exampleV} presents the running times for various values of $c$ with fixed $\delta = 10^5$, $V = 1.25 \cdot 10 ^{8}$. When $c$ decreases there is a substantial decrease in running time, in fact with $c=10$ the running time is more than $3.9\times$ the running time with $c=1.25$. This should not surprise the reader, since such results is in accordance with our complexity analysis (see Section \ref{subsec:complexity_analysis}) which shows that larger samples in terms of edges (as obtained by sampling windows with a higher $c$) result in a higher running time. 

\begin{table}[t]
    \centering
    \caption{Running times (geometric mean of five different runs) of \algnameone~using $8$ threads on motif G5-1 on EC dataset.}
    \label{tab:exampleV}
    \scalebox{1}{
        \begin{tabular}{ccc}
            \toprule
            $c$& $s$ & running time (sec)\\
            \midrule
            10 & 125 & 242.09 \\
            5 & 250 & 128.94 \\
            2.5 & 500 & 76.43\\      
            1.25 & 1000 & 61.41\\
            \bottomrule
        \end{tabular}
    }
\end{table}

Thus the best choice in order to set $c$ is to keep $c<2$ based on all the experiments we performed, to exploit efficiency and scalability at the maximum of their level.

\section{Running time comparison} \label{app:runningtime}
\begin{figure}[t]
    \centering
    \subfloat[SO]{
        \includegraphics[width=.35\linewidth]{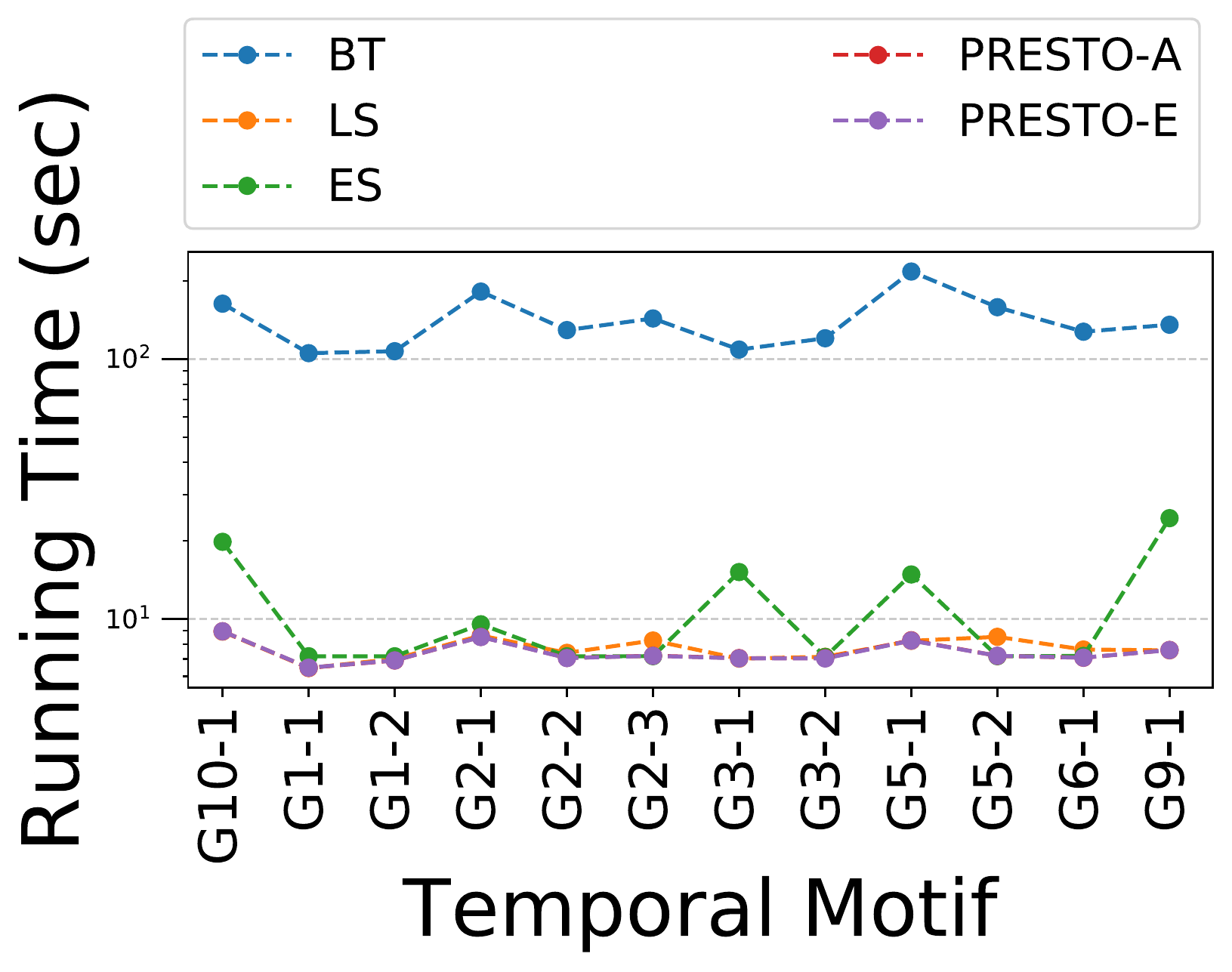}
        \label{subfig:timeSO}
    }
    \subfloat[BI]{
        \includegraphics[width =.35\linewidth]{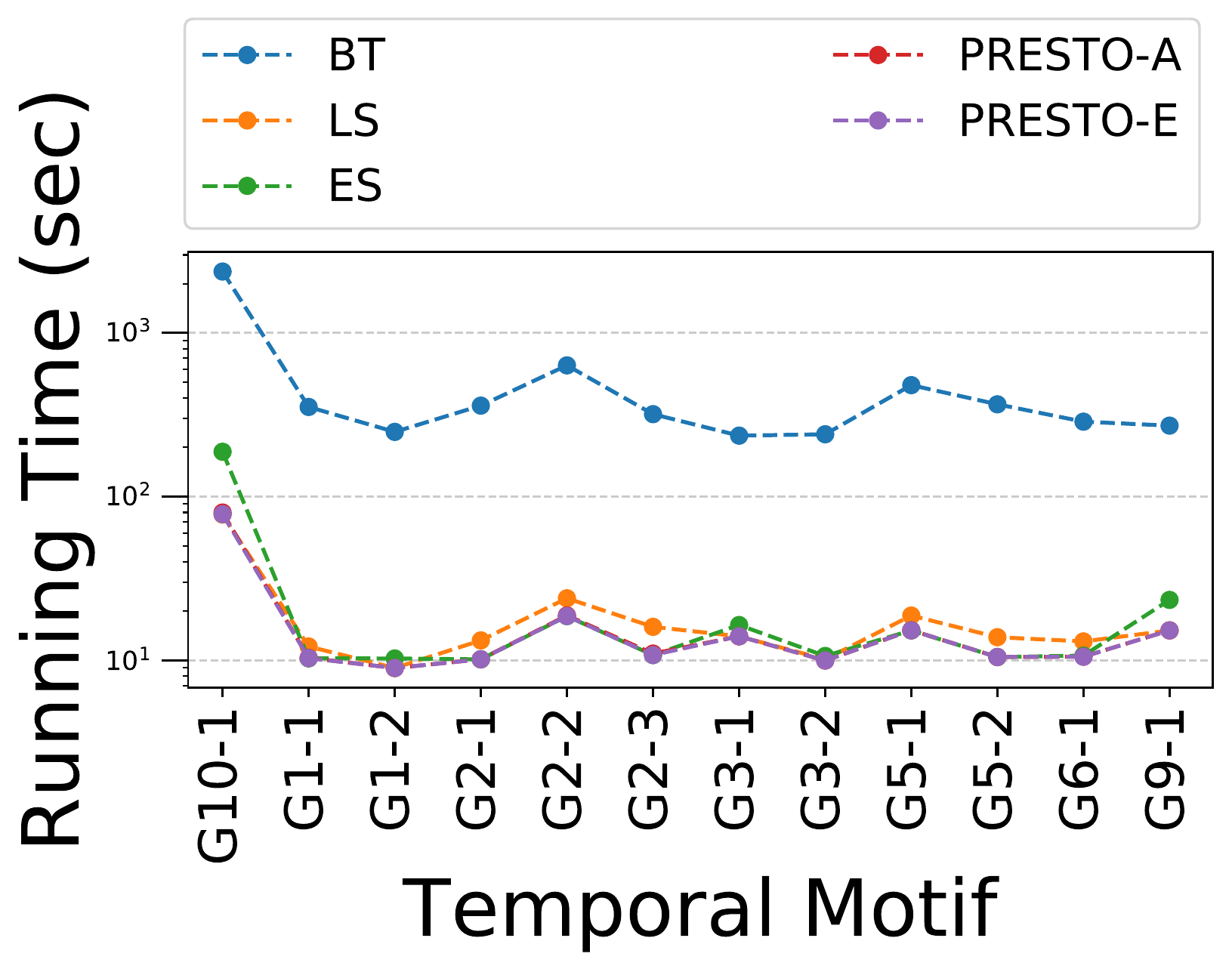}
        \label{subfig:timeBI}
    }\\
    \subfloat[RE]{
        \includegraphics[width=.35\linewidth]{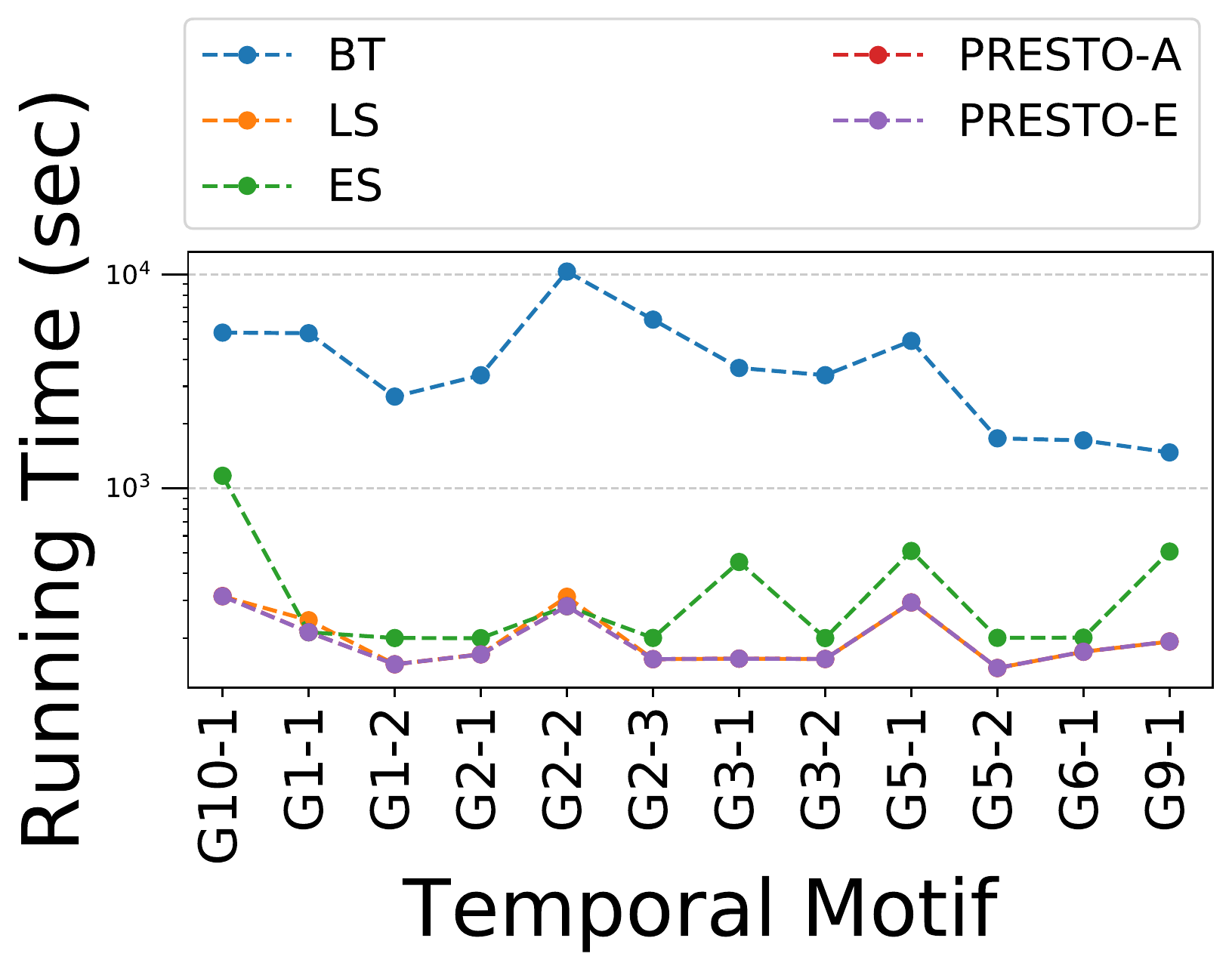}
        \label{subfig:timeRE}
    }
    \subfloat[EC]{
        \includegraphics[width =.35\linewidth]{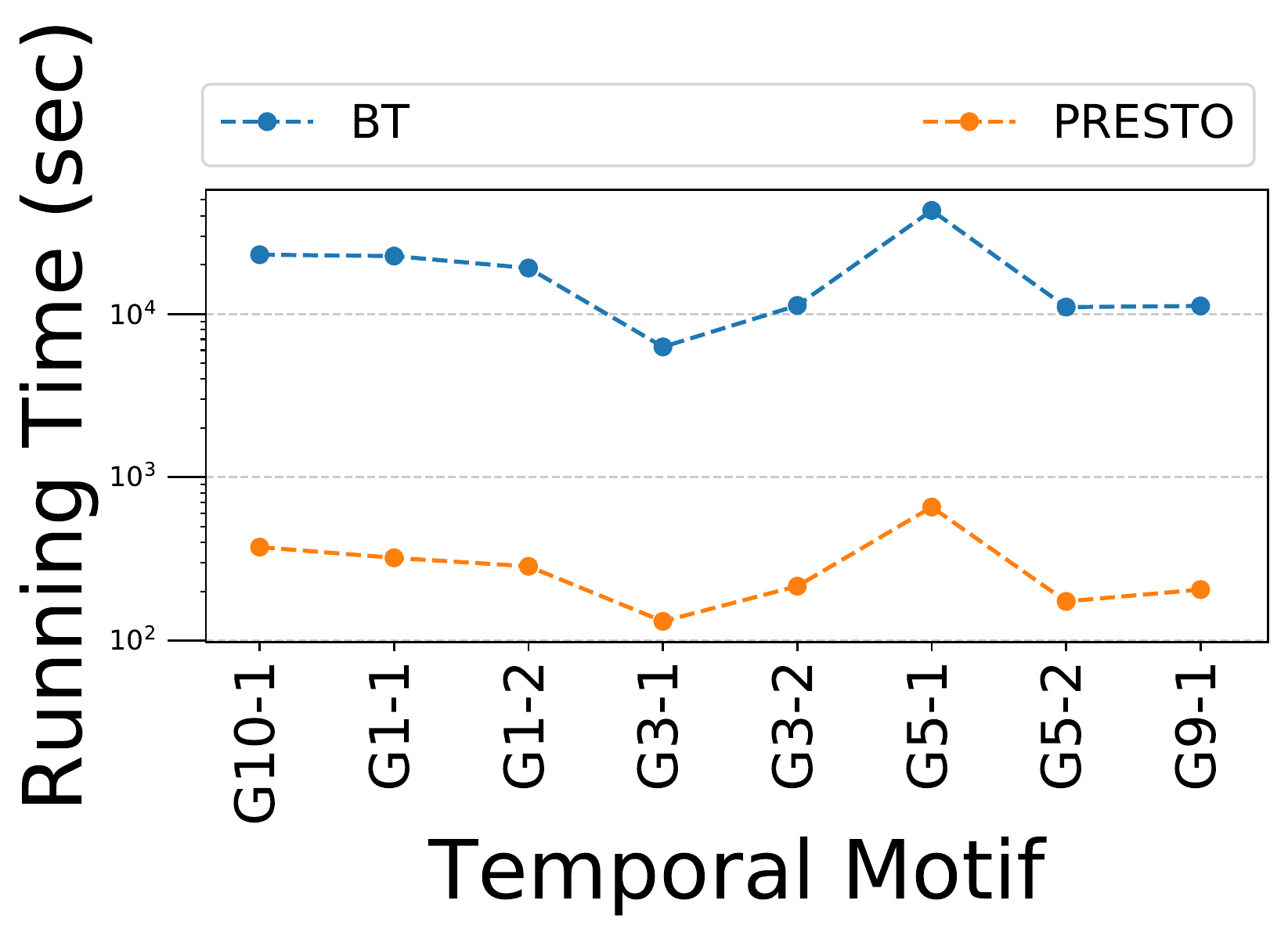}
        \label{subfig:timeEC}
    }
    \caption{Running times on the different datasets, according to the procedure in Section \ref{subsec:expsetup}.}
    \label{fig:runningtimes_app}
\end{figure}
In this section we discuss the running times under which we obtained the results in Section \ref{subsec:approxfact}.

We measured the running time of \algnameone, \algnametwo, \texttt{LS}, and \texttt{ES}  over all experiments (i.e., datasets and motifs) described in Section \ref{subsec:approxfact}, 
we also measured the running time of \texttt{BT} over three runs on the same configurations (i.e., dataset ad $\delta$). For all algorithms we then computed the geometric average of their running times for each pair (dataset, motif). The results are shown in 
Figure \ref{fig:runningtimes_app}. As expected, \algname\ runs with at least an order of magnitude less than \texttt{BT}. The reader may appreciate how the results in Section \ref{subsec:approxfact} 
were obtained with \algname's running time bounded by the minimum running time among \texttt{ES} and \texttt{LS} (fixed the motif and the dataset). Furthermore we note that on several motifs (G10-1, G9-1) on almost all datasets, even if we set $p$ small, i.e., $10^{-5}$, \texttt{ES} runs within a higher time than \texttt{LS} and thus than \algname, we believe that this may be due to \texttt{ES}'s procedure to explore the search space which may not suite such motifs. To this end we find that \algname\ is not affected by the motif topology.
We present the results on \algname's running time against \texttt{BT} on the EC dataset in  
Figure (\ref{fig:runningtimes_app}d) which corresponds to the times under which we obtained the results in Table \ref{tab:approxcaida}. 
Note that \algname\ runs orders of magnitude faster than \texttt{BT} while still providing accurate estimates. For example on G3-2 \algname\ achieves a 3\% relative error (see Table \ref{tab:approxRe}) while being 53$\times$ faster than \texttt{BT}.

Coupled with the results of Sections \ref{subsec:approxfact} 
and \ref{subsec:memory}, these aspects show that \algname\ is an efficient algorithm to obtain accurate estimates of motif counts from large temporal networks, especially when the running time to obtain an estimate is very limited, this may be the case on very large datasets.

\section{Scalability of \algname's parallel implementation}\label{app:scalability_suppl}
\begin{figure}[h]
    \centering
    \begin{tabular}{cc}
        \includegraphics[width=.35\linewidth]{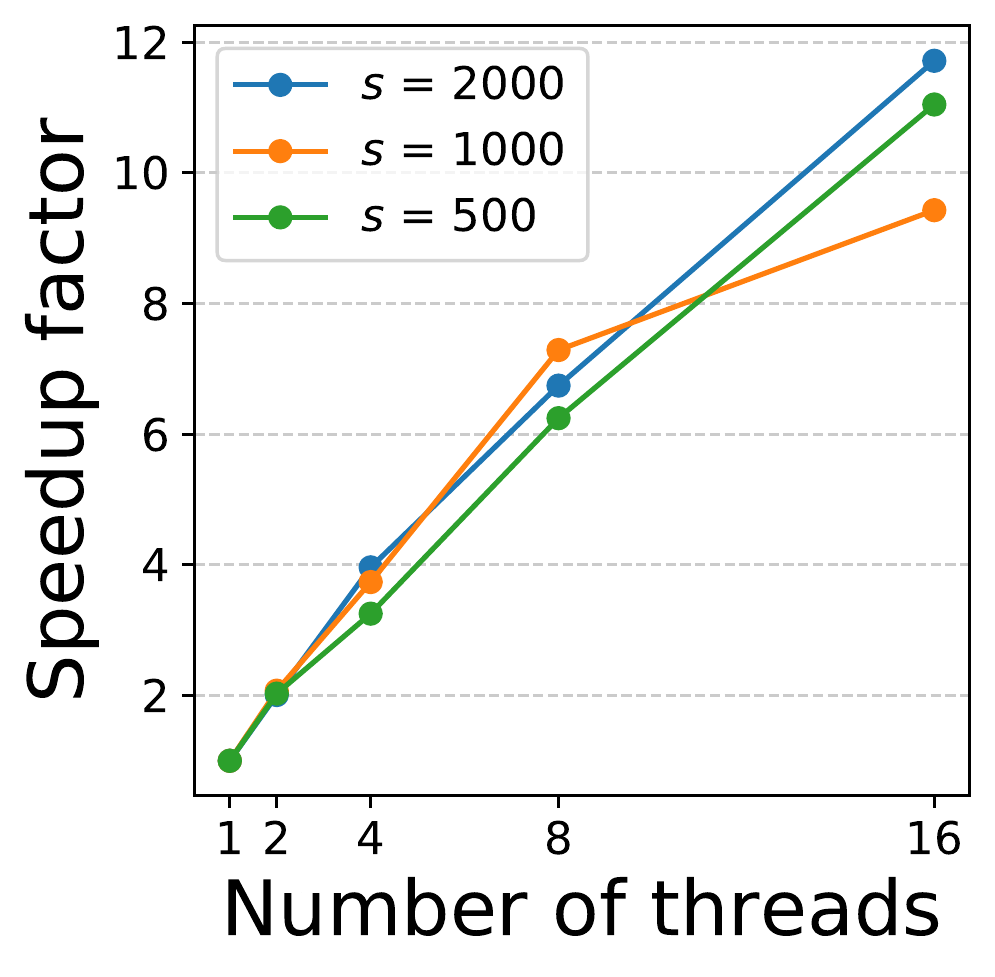} & 
        \includegraphics[width=.35\linewidth]{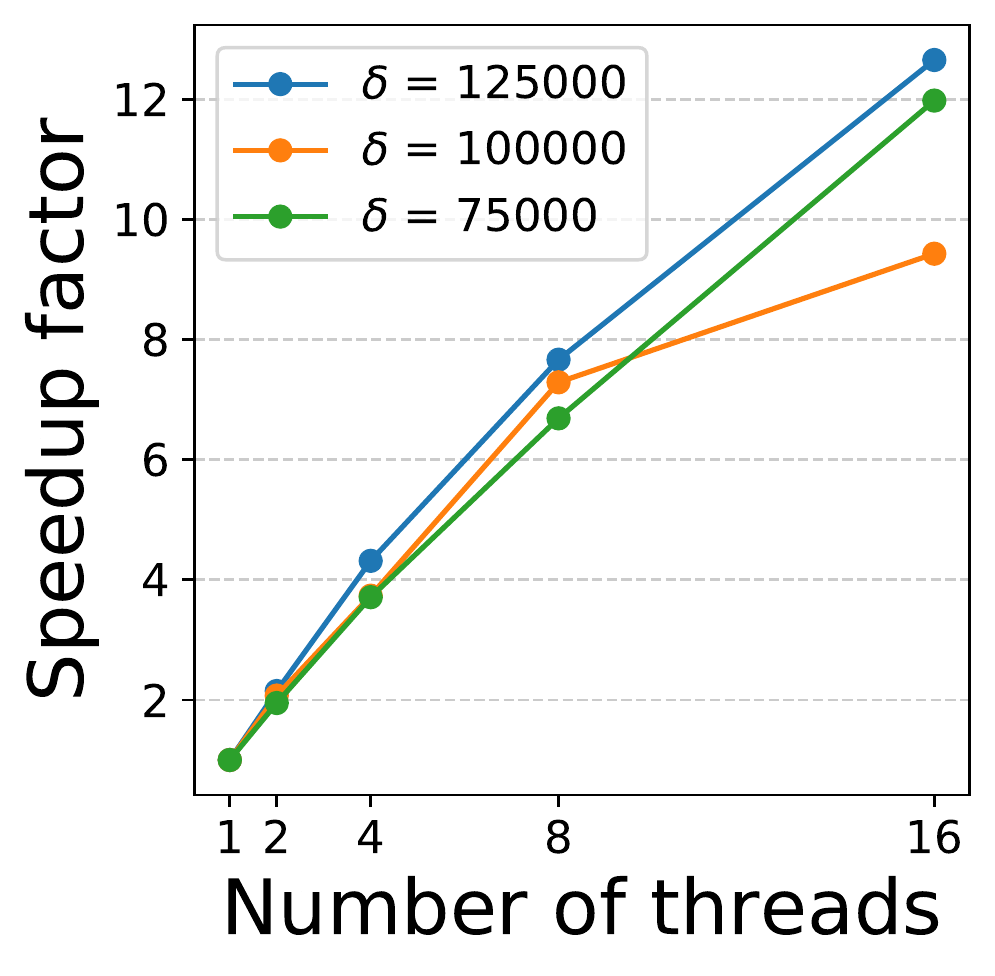} 
    \end{tabular}
    \caption{Speed-up factor achieved through the parallel implementation of \algname\ by varying the number of threads, when using different values for the sample size $s$ (left) and of the temporal duration $\delta$ (right).}
    \label{fig:paralleldelta}
\end{figure}
As discussed in Section \ref{subsec:generalSchema}, \algname~can be easily parallelized by executing the $s$ iterations of Algorithm \ref{alg:generalschema} in parallel, in this section we discuss such implementation. We implemented such strategy through a \emph{thread pooling} design pattern. We present the results of the multithread version of \algnameone. Results for \algnametwo\ are similar. 

We present the results on EC for motif G5-1 (see Figure \ref{fig:motifs}), which has the highest running time over all the experiments in Section \ref{subsec:approxfact} (see Figure (\ref{fig:runningtimes_app}d)), similar results are observed on other motifs and datasets. Figure \ref{fig:paralleldelta} shows the speed-up $T_1/T_j$ where $T_j, j\in\{1,2,4,8,16\}$ is the geometric average of the running time over five runs (with fixed parameters) executing \algnameone\ with $j$ threads. In Figure \ref{fig:paralleldelta} (left) we fixed $c$ and $\delta$ as in Supplement \ref{app:parameters} (Table \ref{tab:datasets_params}), 
and we show how the speed-up changes by varying the sample size $s$. 
We observe a nearly linear speed-up for up to 8 threads, and a speed-up for 16 threads that is slightly less than linear. 
In Figure \ref{fig:paralleldelta} (right) we fix $c=1.25$ and $s=1000$ and we show how the speed-up changes across different values of $\delta \in \{75,100,125\} \cdot 10^{3}$. As we can see there is a linear and slightly superlinear behaviour up to 8 threads, with the performances becoming slightly less than linear when using 16 threads, similarly to the results obtained for the tests with fixed $\delta$. This behaviour may be due to design pattern or to the time needed to process each sample, intuitively if the time to process each sample is not large, then there may be much time spent by the algorithm in synchronization when using a large number ($\ge$ 16) of threads and this may be the case on EC dataset.

All together, the results of our experimental evaluation show that \algname\ obtains up to $500\times$ of speed-up over exact procedure on the tests we performed while providing accurate estimates through its parallel implementation (53$\times$ faster through its sequential implementation, while its parallel implementation with 16 threads executes with a speedup of more than 10$\times$ over the sequential one), enabling the efficient analysis of billion edges networks.

\section{Edge Timestamps distribution - Skewed (Dataset RE) vs Uniform (Dataset SO)}\label{app:edge_distribution}
In this section we compare the edge distributions, with respect to the timestamps, of the datasets SO and RE used in our experimental evaluation in Section \ref{sec:experimentaleval}. To obtain such distribution we computed for each timestamp $t_i, i=1,\dots, m$ s.t.\ $\exists (u,v,t_i)\in E$, the number of edges in the forward window of length $c\delta$ starting from $t_i$, i.e., we computed $|\{(x,v,t)\in E | t \in[t_i, t_i+c \delta]  \}|$ for each $t_i, i=1,\dots,m$, where we kept the parameters $c$ and $\delta$ as in Table \ref{tab:datasets_params}. This corresponds, for example, to the number of edges in each sample candidate of \algnametwo. The results are shown in Figure \ref{fig:tdim}. 

\begin{figure}[t]
    \centering
    \begin{tabular}{cc}
        \includegraphics[width=.47\linewidth]{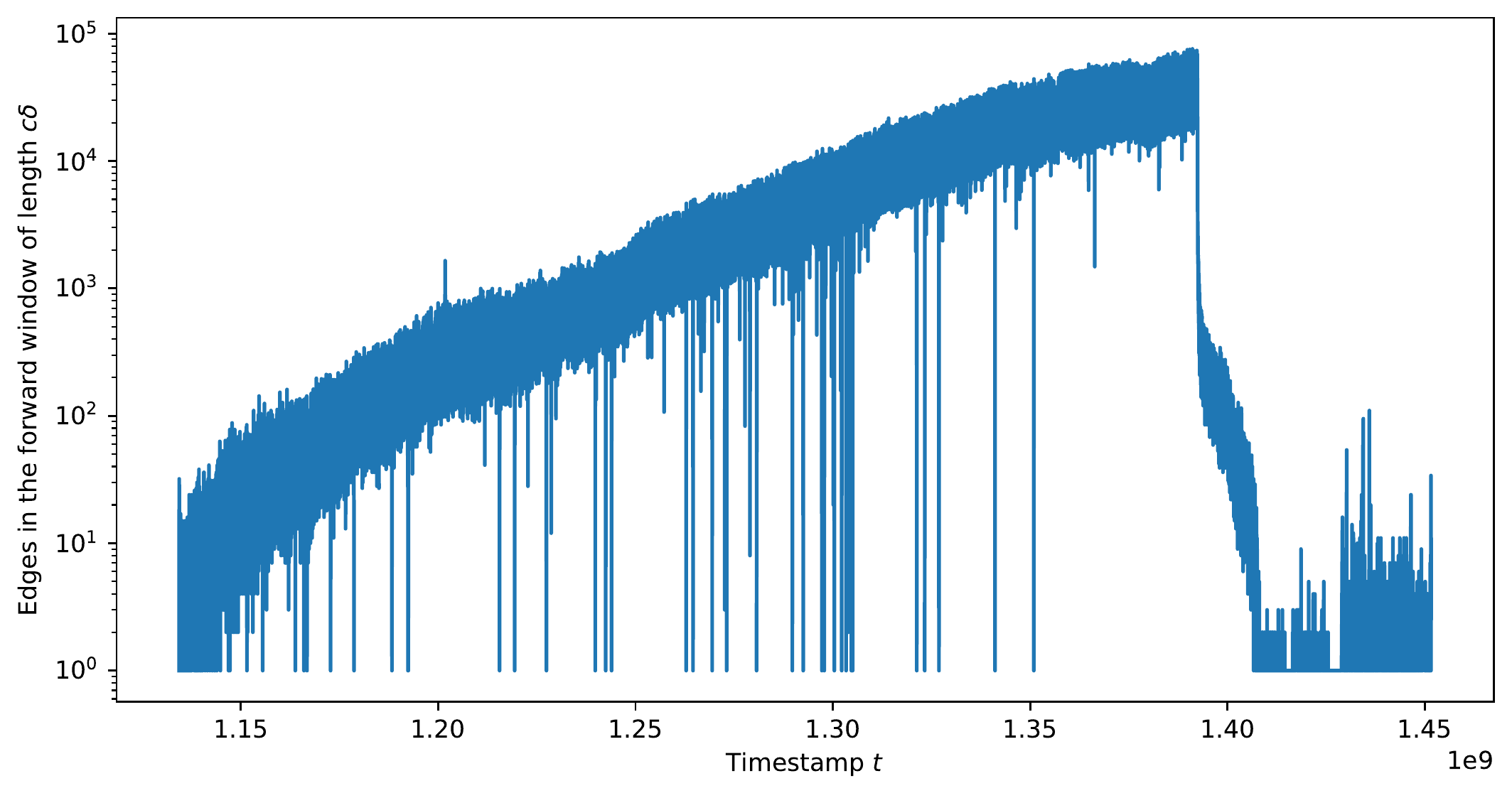} &
        \includegraphics[width=.47\linewidth]{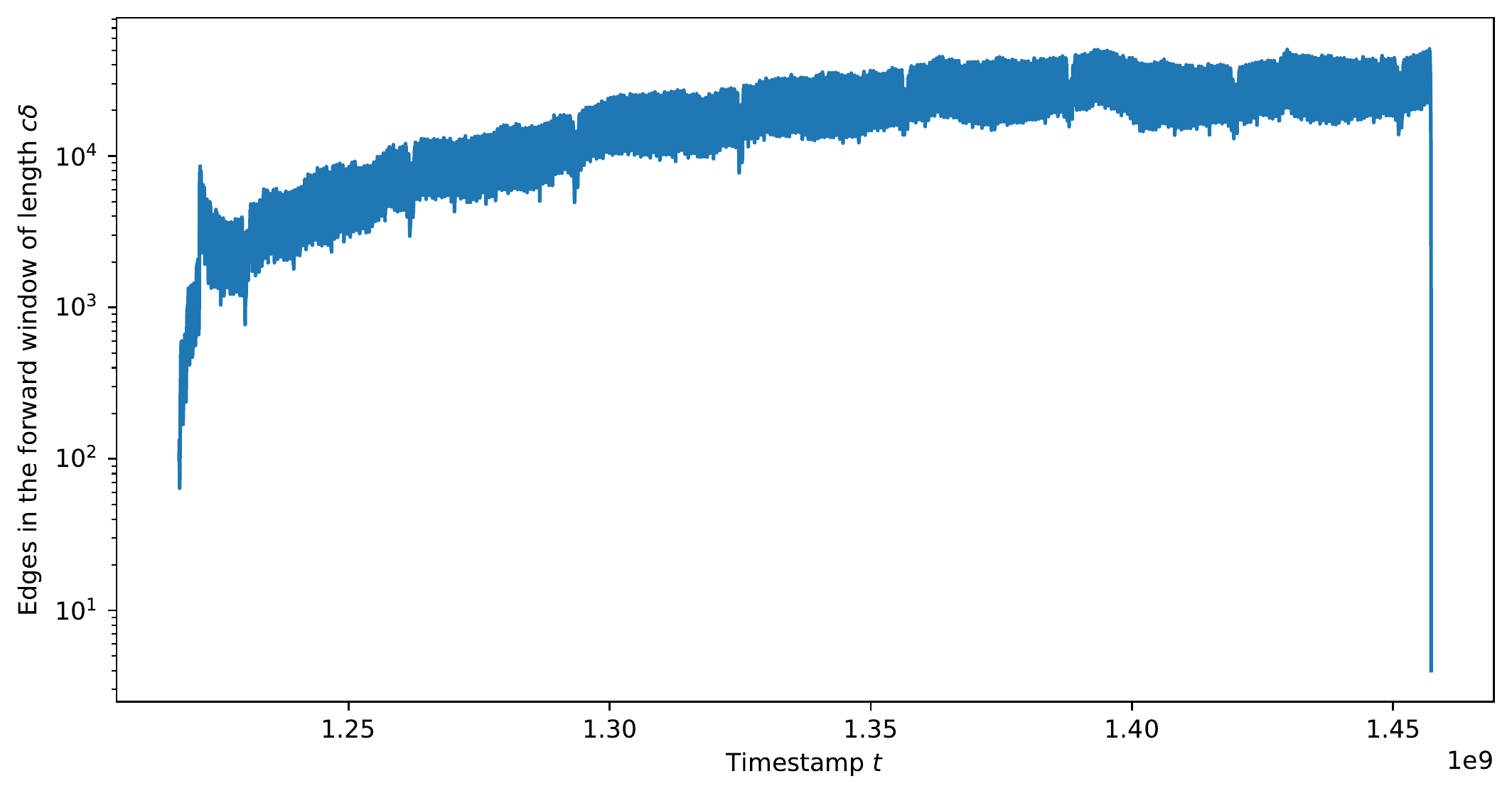} \\
    \end{tabular}
    \caption{(Left): distribution of the edges on dataset RE. (Right): distribution of the edges on dataset SO. Both plot's y-axis are in \emph{log} scale.}
    \label{fig:tdim}
\end{figure}

We first discuss the distribution of the timestamps of the edges in the RE dataset (Figure \ref{fig:tdim} (left)), we observe immediately that such distribution is very skewed. In particular, the number of edges in the windows of length $c\delta$ starting from edges with timestamps in the first quarter of $[t_1,t_m]$, i.e., $t < 1.25\cdot 10^{9}$, do not exceed $10^3$ edges. While most of the temporal windows starting at timestamps in the middle interval of $[t_1,t_m]$, i.e., $1.25\cdot 10^{9} <t< 1.4\cdot 10^{9}$, have much more edges ranging from $10^3$ to $10^5$ (several orders of magnitude edges more than windows from the first quarter of the network's timespan). In the last quarter of $[t_1,t_m]$ instead ($t> 1.4\cdot 10^9$) the dataset presents very sparse edges with most of the windows (which are sample candidates in \algname) not exceeding $10^2$ edges, note that the windows starting in the middle section of $[t_1,t_m]$ have at least one order of magnitude additional edges (i.e., $[10^3,10^5]$ edges). This shows how the timestamps of the edges in RE have a very skewed distribution.

Figure \ref{fig:tdim} (right) instead shows the distribution of the timestamps of the edges on the dataset SO. As we can see, except for a very brief transient state, the timestamps are almost uniformly distributed on $[t_1,t_m]$, with the windows of length $c\delta$ centered at each timestamp $t_i, i=1,\dots,m$ having almost the same number of edges, or ranging in at most one order of magnitude of difference. Therefore the edges of $E$ present a very uniform distribution of the timestamps in the dataset SO over the interval $[t_1,t_m]$. 

We  believe that such distributions may affect the results of the approximation of the different version of \algname\ as discussed in the main text, with \algnametwo\ being less sensible to skewed behaviours than \algnameone.

\end{document}